\newlength{\mywidth}
\newcommand{\Rcolumn}[1]{\mathmakebox[\mywidth][c]{#1}}
\newcolumntype{R}{>{\collectcell\Rcolumn}c<{\endcollectcell}}
\newsavebox{\tempbox}
\newcommand{\norm}[1]{\left\lVert#1\right\rVert}
\begin{document}
\begin{frontmatter}

\title{Input-Output Data-Driven Sensor Selection for Cyber-Physical Systems\thanksref{footnoteinfo}}

\thanks[footnoteinfo]{This work was supported in part by ARO under grant No. W$911$NF-$24-1-0174$, and by NSF under grant Nos. CPS-$2227185$, SATC-$2231651$, SLES-$2415479$.}

\author[1]{Filippos Fotiadis}\ead{ffotiadis@utexas.edu},    
\author[2]{Kyriakos G. Vamvoudakis}\ead{kyriakos@gatech.edu}  

\address[1]{The Oden Institute for Computational Engineering \& Sciences, The University of Texas at Austin, Austin, TX, USA}\vspace{-2mm}
\address[2]{The Daniel Guggenheim School of Aerospace Engineering, 
Georgia Institute of Technology, Atlanta, GA, 
USA}  

\begin{keyword}
Sensor selection, input-output data, cyber-physical systems, unknown systems.
\end{keyword}

\begin{abstract}
In this paper, we consider the problem of input-output data-driven sensor selection for unknown cyber-physical systems (CPS). In particular, out of a large set of sensors available for use, we choose a subset of them that maximizes a metric of observability of the CPS. The considered observability metric is related to the $\mathcal{H}_2$ system norm, which quantifies the average output energy of the selected sensors over a finite or an infinite horizon. However, its computation inherently requires knowledge of the unknown matrices of the system, so we draw connections from the reinforcement learning literature and design an input-output data-driven algorithm to compute it in a model-free manner. We then use the derived data-driven metric expression to choose the best sensors of the system in polynomial time, effectively obtaining a provably convergent model-free sensor selection process. Additionally, we show how the proposed data-driven approach can be exploited to select sensors that optimize volumetric measures of observability, while also noting its applicability to the dual problem of actuator selection. Simulations are performed to demonstrate the validity and effectiveness of the proposed approach.
\end{abstract}

\end{frontmatter}

\section{Introduction}

Cyber-physical systems (CPS) are heterogeneous systems that integrate analog and digital components, along with communication channels through which these exchange data. Among the most critical components of a CPS are its sensors and actuators, which must be carefully selected to ensure sufficient observability and controllability of the system \cite{pasqualetti2014controllability, liu2013observability}. This problem of properly choosing the CPS' sensors (or actuators) is known as the sensor (or actuator) selection problem \cite{Taha, manohar2021optimal}.

The integration of heterogeneous components makes CPS widely applicable, but also challenging to control and prone to faults that can make prior models obsolete. This has led to growing interest in data-driven control strategies that rely solely on measured data, bypassing explicit models \cite{lewis2009reinforcement}. However, while data-driven control has received considerable attention, sensor and actuator selection in this setting remains largely unexplored. Most learning-based methods in this domain are heuristic, lack suitable theoretical guarantees, and are based mainly on intuition and engineering \cite{wang2019reinforcement, guo2008adaptive, semaan2017optimal}. In addition, while learning-based methods for actuator selection were recently presented in \cite{ye2024online, fotiadis2023data, fotiadis2024learning, azuma},  those rely on the assumption of full-state feedback and do not tackle the problem of sensor selection. This gap in the literature motivates the present work: we present the first input-output data-driven sensor selection procedure for CPS that is agnostic to system models and which comes with theoretical guarantees of convergence.

\textit{Related Work:} In most cases, sensor selection concerns choosing sensors that maximize a metric of observability \cite{wolf2012optimizing}. For example, \cite{manohar2021optimal, clark2020sensor, chen2011h2, summers2014optimal} developed algorithms for the selection of sensors that optimize $\mathcal{H}_2$ norm-related metrics of the system. The main idea was that efficient sensors should generally yield stronger signals, and this strength is captured by the $\mathcal{H}_2$ norm. Other studies for observability-based sensor selection also regard the optimization of Gramian-related metrics \cite{summers2016convex, yamada2023efficient, bopardikar2017sensor, siami2020deterministic}. However, non-observability-based sensor selection was also proposed in \cite{dhingra2014admm, zare2019proximal, tzoumas_lqg} for linear quadratic regulation, and in \cite{milovsevic2020actuator, pirani2021game} for security.

A limitation of the above results is that they are model-based, requiring complete knowledge of the system dynamics to perform the sensor selection procedure. This can often be a restrictive requirement for CPS, which are complex by definition and thus often contain unmodeled dynamics. On the other hand, existing learning-based methods mainly concern the problem of actuator selection and assume full-state feedback \cite{ye2024online, fotiadis2023data, fotiadis2024learning}, are heuristic or concern static system models \cite{wang2019reinforcement, guo2008adaptive, semaan2017optimal}, or use learning for purposes unrelated to having unknown system dynamics \cite{manohar2018data, vafaee2024learning, ballota}. These limitations create motivation for the development of model-free sensor selection procedures with theoretical guarantees of convergence.

RL and adaptive control are frequently employed to tackle unknown dynamics in control \cite{lewis2009reinforcement}. For example, \cite{kiumarsi2017optimal} surveys learning-based adaptive control procedures for the optimal control of unknown systems using input-state data, and a framework solely using input-output data was presented in \cite{lewis2010reinforcement}. The main advantage of these studies is that they are not based on heuristic arguments, however, their main aim is to design \textit{controllers}. On the other hand, the main goal of the present work is to design data-driven algorithms for \textit{selection}. 

\textit{Contributions}: Inspired by the aforementioned RL-based studies for optimal control, we design the first model-free sensor selection algorithm with theoretical guarantees of convergence. More specifically, we consider the problem of choosing sensors that maximize observability metrics related to the $\mathcal{H}_2$ norm of the system, which requires solving a set of Lyapunov equations. However, these Lyapunov equations are model-based, so we re-express them solely with respect to input-output data gathered from the system and use the resulting expression to perform guaranteed model-free sensor selection. The proposed process is also applicable to the problem of actuator selection, as well as to the problem of optimizing volumetric observability measures.

A preliminary version appeared in \cite{fotiadis2024input}. This paper extends \cite{fotiadis2024input} by providing full technical proofs, relaxing assumptions on system knowledge and the required rank of the input-output data, addressing finite-horizon and volumetric observability metrics, and discussing observability verification.

\textit{Notation}:  For any signal \( x:\mathbb{N} \to \mathbb{R}^n \), we define \( x(t_1:t_2) = [x^\mathrm{T}(t_2)~x^\mathrm{T}(t_2-1)~\ldots~x^\mathrm{T}(t_1)]^\mathrm{T} \) for \( t_1, t_2 \in \mathbb{N} \) with \( t_2 \ge t_1 \). The norm \( \|\cdot\| \) denotes the Euclidean norm, while \( \rho(\cdot) \), \( \mathrm{tr}(\cdot) \), and \( \det(\cdot) \) denote the spectral radius, trace, and determinant of a square matrix, respectively. For a matrix \( Z \), \( Z^\dagger \) denotes the Moore–Penrose pseudoinverse. If \( Z \in \mathbb{R}^{n \times m} \), then \( \mathrm{vec}(Z) = [z_{1,1}~z_{2,1}~\ldots~z_{n,1}~z_{1,2}~\ldots~z_{n,m}]^\mathrm{T} \) is its vectorized form, and \( \mathrm{vec}^{-1} \) denotes the inverse operation. If \( Z \) is square and symmetric, we define \( \mathrm{vech}(Z) = [z_{1,1}~z_{2,1}~\ldots~z_{n,1}~z_{2,2}~\ldots~z_{n,n}]^\mathrm{T} \) as the half-vectorized form, \( \mathrm{vecs}(Z) = [z_{1,1}~2z_{2,1}~\ldots~2z_{n,1}~z_{2,2}~2z_{3,2}~\ldots~z_{n,n}]^\mathrm{T} \) as the scaled half-vectorized form, and \( \lambda_{\min}(Z) \) as its minimum eigenvalue. Using these, we define the operator \( H(\cdot) = \mathrm{vecs}(\mathrm{vec}^{-1}(\cdot)) \) and its inverse \( H^{-1}(\cdot) = \mathrm{vec}(\mathrm{vecs}^{-1}(\cdot)) \). The symbol \( \otimes \) denotes the Kronecker product, \( e_i \) denotes the \( i \)-th standard basis vector, and \( I_n \) the \( n \times n \) identity matrix.

\section{Problem Formulation}\label{sec:pr}

\subsection{System and Preliminaries}

Consider, for $t\in\mathbb{N}$, a discrete-time system of the form:
\begin{equation}\label{eq:sys}
\begin{split}
x(t+1)&=Ax(t)+Bu(t),~x(0)=x_0,\\
y(t)&=Cx(t),
\end{split}
\end{equation}
where $x(t)\in\mathbb{R}^n$ is the state with initial value $x_0\in\mathbb{R}^n$, $u(t)\in\mathbb{R}^m$ is the control input, $y(t)$ is the output, and $A\in\mathbb{R}^{n\times n}$, $B\in\mathbb{R}^{n\times m}$, $C\in\mathbb{R}^{p\times n}$ are the system matrices. 
Each row of $C$ represents a sensor, whereas each column of $B$ represents an actuator. In particular,
\begin{align*}
C&=\begin{bmatrix} c_1^\mathrm{T} & c_2^\mathrm{T} & \ldots & c_p^\mathrm{T} \end{bmatrix}^\mathrm{T},~B=\begin{bmatrix} b_1 & b_2 & \ldots & b_m \end{bmatrix},
\end{align*}
where $c_j\in\mathbb{R}^{1\times n}$, $j=1,\ldots,p$ represents a sensor, whereas $b_i\in\mathbb{R}^n$, $i=1,\ldots,m$, represents an actuator.

In the sensor selection problem, we assume that we want to select only $p'<p$ out of the $p$ sensors available for use at each time instant. That is, we seek a selection matrix
\begin{equation}\label{eq:Smatrix}
\mathbb{S}_\gamma=\begin{bmatrix}e_{\gamma_1} & e_{\gamma_2} & \ldots & e_{\gamma_{p'}} \end{bmatrix}^\mathrm{T},
\end{equation}
with $\gamma=\{\gamma_1,\ldots,\gamma_{p'}\}\subset\{1,\ldots p\}$, that will choose certain rows of $C$, so that the new sensing matrix is 
\begin{equation*}
\tilde{C}=\mathbb{S}_{\gamma}C.
\end{equation*}
Accordingly, in the actuator selection problem, we want to select only $m'<m$ out of the $m$ actuators available at each time instant through the selection matrix
\begin{equation*}
\mathbb{S}_\beta=\begin{bmatrix}e_{\beta_1} & e_{\beta_2} & \ldots & e_{\beta_{m'}} \end{bmatrix},
\end{equation*}
where $\beta=\{\beta_1,\ldots,\beta_{m'}\}\subset\{1,\ldots m\}$. The new actuating matrix will then be 
\begin{equation*}
\tilde{B}=B\mathbb{S}_\beta.
\end{equation*}
Various studies have been carried out in the literature to solve sensor or actuator selection problems with performance guarantees \cite{summers2014optimal, summers2016convex, yamada2023efficient, bopardikar2017sensor}.
In this work, however, we consider these problems under a setup in which
\begin{enumerate}
\item All system matrices $A, B, C$ are unknown.
\item Only the output provided by the selected sensors at the instant $t\in\mathbb{N}$ is available for measurement, i.e., $y'(t)=\mathbb{S}_\gamma Cx(t)$.
\end{enumerate}
 As a minimum requirement, we still require the following observability Assumption, which implies we know a set of sensors that renders the system observable.

\begin{assumption}\label{ass:obs}
There exists a known selection index $\hat{\gamma}=\{\hat{\gamma}_1,\ldots,\hat{\gamma}_{r}\}\subset \{1,\ldots,p\}$, $r<p'$, such that $(A,~\mathbb{S}_{\hat{\gamma}}C)$ is observable. \frqed
\end{assumption}

\begin{remark}
Assumption \ref{ass:obs} implies we have some empirical knowledge about the system, particularly regarding its observability.  This is often the case, for example, in mechanical systems, where it is well-known that measuring positions or angles makes the closed-loop observable. 
\end{remark}

\begin{remark}
If one has sufficient resources to concurrently utilize all sensors during the data gathering process, then   $\hat{\gamma}$ in Assumption $1$ shall be unknown. \frqed
\end{remark}

\subsection{Observability and Controllability Indices}\label{subsec:gramians}

The next step is to define the cost function that will distinguish which set of sensors or actuators is the best to select, and metrics of observability and controllability become relevant here. Nevertheless, since all system matrices $A, B, C$ are unknown, there is no unique state-space representation for the input-output behavior of \eqref{eq:sys}. This limitation restricts us to metrics that apply to the transfer function $G$ of the tuple $(A,B,C)$, instead of directly to the tuple $(A,B,C)$.

\subsubsection{Infinite-Horizon Indices}

One metric that quantifies controllability and observability is the $\mathcal{H}_2$ norm of the transfer function $G$ \cite{manohar2021optimal}. This is defined as the infinite-horizon average output energy $\norm{G}_2^2=\mathbb{E}[\sum_{t=0}^\infty \norm{y(t)}^2]$
under the condition that $x_0=0$ and $u(0)\sim \mathcal{N}(0,I_m)$.  The motivation behind choosing this norm as the cost function for the sensor selection problem is that more observable sensors will generally yield larger output signals and, similarly, more controllable actuators will have a larger effect on the measured state under the same actuation signal \cite{manohar2021optimal}.  However, a limitation is that it applies only to asymptotically stable systems, owing to the infinite summation being divergent if $A$ is unstable. To circumvent this issue, following \cite{silva2017stochastic} one can discount the output measurements in the $\mathcal{H}_2$ norm with a parameter $a\in(0,1)$, and instead define a discounted $\mathcal{H}_2$ norm as:
\begin{equation}\label{eq:H2a_int}
\norm{G}_{2,a}^2=\mathbb{E}\left[\sum_{t=0}^\infty a^{2(t-1)}\norm{y(t)}^2\right].
\end{equation}
If $a<\frac{1}{\rho(A)}$, similar to \cite{silva2017stochastic}, it is straightforward to verify that \eqref{eq:H2a_int} is well-defined. In particular, it follows that $\norm{G}_{2,a}^2=\|\tilde{G}\|^2_2$ where $\tilde{G}$ is the transfer function of the tuple $(aA,B,C)$. In addition,  it follows from standard linear systems theory \cite{bernstein2009matrix}  that \eqref{eq:H2a_int} is equivalent to
\begin{equation}\label{eq:trinf}
\textrm{tr}(CW_cC^\mathrm{T}),~\textrm{tr}(B^\mathrm{T}W_oB)
\end{equation}
where $W_o,W_c$ are observability and controllability Gramians for the pairs $(aA,C)$ and $(aA,B)$.

\subsubsection{Finite-Horizon Indices}

Infinite-horizon metrics like \eqref{eq:H2a_int} have the advantage that they can be computed with relatively low computations by solving time-independent ALEs. However, such metrics may not be applicable in cases where the system is deployed only for a fixed amount of time, or when a sufficiently small discount factor $a$ is not a priori known. In these setups, one may instead consider the finite-horizon counterpart of the average output energy \eqref{eq:H2a_int}:
\begin{equation}\label{eq:H2a_intT}
\norm{G_T}_2^2=\mathbb{E}\left[\sum_{t=0}^T \norm{y(t)}^2\right],~T\in\mathbb{N},
\end{equation}
which, following standard systems theory, boils down to either of the following two equivalent expressions:
\begin{equation}\label{eq:trfin}
\textrm{tr}(CW_c(T)C^\mathrm{T}),~\textrm{tr}(B^\mathrm{T}W_o(T)B).
\end{equation}
Unlike \eqref{eq:trinf}, the matrices $W_o(T), W_c(T)$ are the \textit{finite-horizon} observability and controllability Gramians of $(A,C)$ and $(A,B)$, whose existence is not conditional on knowing a sufficiently small discount factor $a$.

\subsection{Data-Driven Sensor and Actuator Selection}

In the rest of this paper, we focus on finding the sensors (or actuators) that maximize 
\eqref{eq:trinf} or \eqref{eq:trfin}. However, in doing so we will rely solely on input-output data, and not on the unknown system matrices $A, B, C$. In particular, for the sensor selection problem and depending on the metric of interest, we want to find the optimal selection matrix $\mathbb{S}_{\gamma^\star}\subseteq S_{p,p'}$, where $S_{p,p'}$ is the set of selection matrices $\mathbb{S}_{\gamma}$ of the form 
 \eqref{eq:Smatrix} in which $\gamma=\{\gamma_1,\ldots,\gamma_{p'}\}\subset\{1,\ldots p\}$, that optimizes either one of the following two measures (as defined previously).

\begin{enumerate}
\item Infinite-horizon average output energy:
\begin{equation}\label{eq:sensor_problem_trinf}
\mathbb{S}_{\gamma^\star}=\argmax_{\mathbb{S}_{\gamma}\subseteq S_{p,p'}} \textrm{tr}(\mathbb{S}_{\gamma}CW_cC^{\mathrm{T}}\mathbb{S}_{\gamma}^\mathrm{T}).
\end{equation}
\item Finite-horizon average output energy:
\begin{equation}\label{eq:sensor_problem_trfin}
\mathbb{S}_{\gamma^\star}=\argmax_{\mathbb{S}_{\gamma}\subseteq S_{p,p'}} \textrm{tr}(\mathbb{S}_{\gamma}CW_c(T)C^{\mathrm{T}}\mathbb{S}_{\gamma}^\mathrm{T}).
\end{equation}
\end{enumerate}

\begin{remark}
The problems of sensor and actuator selection are dual in nature. For this reason, in what follows, we focus only on the problem of sensor selection, while noting that the approach we present applies to the actuator selection problem with minor modifications. \frqed
\end{remark}

\begin{remark}
The $\mathcal{H}_2$ norm is a desirable metric for sensor selection as it reflects the trace of a weighted observability Gramian \cite{summers2014optimal, manohar2021optimal}. For actuator selection, maximizing the $\mathcal{H}_2$ norm also improves controllability as it captures the trace of a weighted controllability Gramian. However, in the presence of input disturbances, additional considerations may be needed to mitigate their influence. \frqed
\end{remark}

\begin{remark}
Instead of maximizing the $\mathcal{H}_2$ norm, one can select sensors by optimizing volumetric observability measures \cite{summers2015submodularity, manohar2021optimal}. We discuss this in Section \ref{sec:log}. \frqed
\end{remark}

\section{Data-Driven Estimation of the Cost Functions for Fixed Sensor Sets}\label{sec:data}

The first step towards solving the sensor selection problem with unknown system matrices $A, B, C$, is being able to express the cost functions in \eqref{eq:sensor_problem_trinf}-\eqref{eq:sensor_problem_trfin} using input-output data only. In this section, we perform this task for a fixed choice of the selection matrix $\mathbb{S}_{\gamma}$, and we will subsequently use this in the next section to perform data-driven sensor selection.

Before proceeding, note that this task requires i) being able to take measurements from the sensors involved in the matrix $\tilde{C}=\mathbb{S}_{\gamma}C$ (since $C$ is unknown), whose cost function we want to evaluate; and ii) concurrently being able to take measurements from the sensors involved in $\hat{C}=\mathbb{S}_{\hat{\gamma}}C$, as a minimum observability requirement. This reduces the output measurements we can access to:
\begin{equation}\label{eq:sys2}
\begin{split}
x(t+1)&=Ax(t)+Bu(t),~x(0)=x_0,\\
\hat{y}(t)&=\hat{C}x(t),~
\tilde{y}(t)=\tilde{C}x(t).
\end{split}
\end{equation}

\subsection{Expressing the State with Input-Output Data}

Apart from the matrices $A, B, C$ being unknown, a major restriction of the present setup is not having access to full-state feedback, rather, only to the measured outputs available at each time. Nevertheless, directly following \cite{lewis2010reinforcement} the state $x(t)$ can be reconstructed using a history stack of past outputs and control inputs. 

\begin{lemma}\label{le:output}\cite{lewis2010reinforcement}
Let Assumption \ref{ass:obs} hold, so that $(A,\hat{C})$ is observable. Define the matrices
\begin{align*}
U_N&=\begin{bmatrix}B & AB & A^2B & \ldots & A^{N-1}B\end{bmatrix},\\
V_N&=\begin{bmatrix}(\hat{C}A^{N-1})^\mathrm{T} & \ldots & (\hat{C}A)^\mathrm{T} & \hat{C}^\mathrm{T} \end{bmatrix}^\mathrm{T},\\
T_N&=\begin{bmatrix}0 & \hat{C}B & \hat{C}AB & \ldots & \hat{C}A^{N-2}B \\ 0 & 0 & \hat{C}B & \ldots & \hat{C}A^{N-3}B \\ \vdots & \vdots & \ddots & \ddots & \vdots \\ 0 & \ldots & \ & 0 & \hat{C}B \\ 0 & 0 & 0 & 0 & 0 \end{bmatrix}.
\end{align*}
If $N\ge K$, where $K$ is the observability index\footnote{The observability index is upper-bounded as $K\le n$.}, then
\begin{equation*}
x(t)=Mz(t),~t\ge N,
\end{equation*}
where  $z(t)=[u(t-1:t-N)^\mathrm{T} ~ \hat{y}(t-1:t-N)^\mathrm{T} ]^\mathrm{T},$
and $M=[M_u ~ M_y]$,
$M_y=A^{N}V_N^{\dagger}$, $M_u=U_N-M_yT_N$.
\end{lemma}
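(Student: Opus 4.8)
The plan is to reconstruct the state over a sliding window of length $N$: first propagate the dynamics \eqref{eq:sys2} forward from time $t-N$ to time $t$ so as to eliminate all intermediate states, and then invoke the observability of $(A,\hat{C})$ to recover the anchor state $x(t-N)$ from the stacked outputs collected over that same window. Combining the two resulting expressions should produce $x(t)=Mz(t)$ with exactly the claimed $M_u$ and $M_y$.

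First I would unroll the recursion $x(t-N+i+1)=Ax(t-N+i)+Bu(t-N+i)$ for $i=0,\ldots,N-1$, which telescopes into
\begin{equation*}
x(t)=A^N x(t-N)+\sum_{k=0}^{N-1}A^kBu(t-1-k)=A^N x(t-N)+U_N\,u(t-1:t-N),
\end{equation*}
where the ordering of the input window is precisely the one that matches the block columns of $U_N$, so that $u(t-1)$ is weighted by $B$ and $u(t-N)$ by $A^{N-1}B$. Next, applying $\hat{C}$ to these same unrolled states and stacking the outputs $\hat{y}(t-1),\ldots,\hat{y}(t-N)$ yields the input-output relation
\begin{equation*}
\hat{y}(t-1:t-N)=V_N\, x(t-N)+T_N\,u(t-1:t-N),
\end{equation*}
in which the block-Toeplitz matrix $T_N$ gathers exactly the impulse-response terms $\hat{C}A^{j}B$ that arise as each past input propagates to each later output; confirming that these coefficients populate $T_N$ in its stated staircase pattern (strictly upper-triangular in blocks, with a zero final block row) is the one place where careful index-tracking is required.

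The crux of the argument is then inverting the output relation. Since $N\ge K$, with $K$ the observability index, $V_N$ coincides with the observability matrix of $(A,\hat{C})$ up to a reordering of its block rows, and hence has full column rank $n$ by Assumption \ref{ass:obs}; therefore $V_N^\dagger V_N=I_n$, giving $x(t-N)=V_N^\dagger\bigl(\hat{y}(t-1:t-N)-T_N\,u(t-1:t-N)\bigr)$. Substituting this into the forward expression for $x(t)$ and grouping terms produces $x(t)=A^N V_N^\dagger\,\hat{y}(t-1:t-N)+(U_N-A^N V_N^\dagger T_N)\,u(t-1:t-N)=Mz(t)$, which identifies $M_y=A^N V_N^\dagger$ and $M_u=U_N-M_yT_N$ as claimed. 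I expect the only genuine obstacle to be notational rather than mathematical: keeping the window orderings of $u$ and $\hat{y}$ consistent with the block structures of $U_N$, $V_N$, and especially $T_N$, since the substantive step reduces entirely to the standard full-column-rank property of the observability matrix, which is guaranteed by Assumption \ref{ass:obs} together with the choice $N\ge K$.
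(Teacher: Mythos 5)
Your proof is correct and is exactly the standard derivation that the paper delegates to the cited reference \cite{lewis2010reinforcement}: unroll the dynamics over the window, stack the outputs to get $\hat{y}(t-1:t-N)=V_Nx(t-N)+T_Nu(t-1:t-N)$, use full column rank of $V_N$ (from observability and $N\ge K$) to recover $x(t-N)$ via $V_N^\dagger$, and substitute back. The index bookkeeping you flag as the only delicate point does check out against the paper's ordering convention for $u(t_1:t_2)$ and the block structures of $U_N$, $V_N$, and $T_N$.
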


A clear limitation of Lemma \ref{le:output} is that it relies on prior knowledge of the system matrices $A, B, C$. However, this requirement is relaxed in the following subsections.

\subsection{Data-Driven Estimation of Infinite-Horizon Average Output Energy}

Denote $J_{\infty}(\mathbb{S}_{\gamma})=\textrm{tr}(\mathbb{S}_{\gamma}CW_cC^{\mathrm{T}}\mathbb{S}_{\gamma}^\mathrm{T})$ as the infinite-horizon average output energy \eqref{eq:sensor_problem_trinf}, which we want to estimate for a fixed sensor set $\mathbb{S_{\gamma}}$ using input-output data. The following Lemma re-expresses this cost function with respect to a dual ALE, which will subsequently facilitate the task of its data-driven estimation.

\begin{lemma}\label{le:trace}
It holds that
\begin{equation}\label{eq:J2}
J_{\infty}(\mathbb{S}_{\gamma})=\mathrm{tr}(B^\mathrm{T} W_o^\gamma B),
\end{equation}
where $W_o^\gamma$ uniquely solves the ALE
\begin{equation}\label{eq:disc_gram}
a^2A^\mathrm{T}W_o^\gamma A-W_o^\gamma+C^{\mathrm{T}}\mathbb{S}_{\gamma}^\mathrm{T}\mathbb{S}_{\gamma}C=0.
\end{equation}
\begin{proof}
 The proof follows similar lines to \cite{bernstein2009matrix}. \frQED 
\end{proof}
\end{lemma}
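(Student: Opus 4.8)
The plan is to exploit the duality between the controllability Gramian $W_c$ of the pair $(aA,B)$ that appears in $J_\infty(\mathbb{S}_\gamma)$ and the observability Gramian $W_o^\gamma$ of the pair $(aA,\mathbb{S}_\gamma C)$ defined by \eqref{eq:disc_gram}. First I would record the two defining ALEs: since $W_c$ is the controllability Gramian of $(aA,B)$, it satisfies $a^2 A W_c A^\mathrm{T}-W_c+BB^\mathrm{T}=0$, while $W_o^\gamma$ satisfies \eqref{eq:disc_gram}. Because $a<1/\rho(A)$ the scaled matrix $aA$ is Schur stable, so both equations are standard discrete Lyapunov (Stein) equations whose solutions exist and are unique; this uniqueness is exactly what makes the phrase ``$W_o^\gamma$ uniquely solves \eqref{eq:disc_gram}'' meaningful, and it also lets me represent each Gramian by its convergent series $W_c=\sum_{k=0}^\infty a^{2k}A^k BB^\mathrm{T}(A^\mathrm{T})^k$ and $W_o^\gamma=\sum_{k=0}^\infty a^{2k}(A^\mathrm{T})^k C^\mathrm{T}\mathbb{S}_\gamma^\mathrm{T}\mathbb{S}_\gamma C A^k$.

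The core identity to establish is $\mathrm{tr}(\mathbb{S}_\gamma C W_c C^\mathrm{T}\mathbb{S}_\gamma^\mathrm{T})=\mathrm{tr}(B^\mathrm{T}W_o^\gamma B)$. The cleanest route uses only the cyclic property of the trace together with the two ALEs, bypassing the series. Writing $Q=C^\mathrm{T}\mathbb{S}_\gamma^\mathrm{T}\mathbb{S}_\gamma C$ and $P=BB^\mathrm{T}$, the left-hand side is $\mathrm{tr}(QW_c)$ and the right-hand side is $\mathrm{tr}(PW_o^\gamma)$. Substituting $Q=W_o^\gamma-a^2A^\mathrm{T}W_o^\gamma A$ from \eqref{eq:disc_gram} gives $\mathrm{tr}(QW_c)=\mathrm{tr}(W_o^\gamma W_c)-a^2\mathrm{tr}(A^\mathrm{T}W_o^\gamma A W_c)$, while substituting $P=W_c-a^2 A W_c A^\mathrm{T}$ from the controllability ALE gives $\mathrm{tr}(PW_o^\gamma)=\mathrm{tr}(W_c W_o^\gamma)-a^2\mathrm{tr}(A W_c A^\mathrm{T} W_o^\gamma)$. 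The cyclic property then equates the leading terms, $\mathrm{tr}(W_o^\gamma W_c)=\mathrm{tr}(W_c W_o^\gamma)$, and the correction terms, $\mathrm{tr}(A^\mathrm{T}W_o^\gamma A W_c)=\mathrm{tr}(A W_c A^\mathrm{T} W_o^\gamma)$, so the two expressions coincide.

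As a cross-check I would verify the same equality through the series: both traces collapse term-by-term to $\sum_{k=0}^\infty a^{2k}\,\mathrm{tr}\!\big((\mathbb{S}_\gamma C A^k B)(\mathbb{S}_\gamma C A^k B)^\mathrm{T}\big)$, i.e. the discounted sum of squared Frobenius norms of the Markov parameters of $(A,B,\mathbb{S}_\gamma C)$, which is precisely the $\mathcal{H}_2$-norm interpretation of $J_\infty(\mathbb{S}_\gamma)$. I expect the only genuinely delicate point to be well-posedness rather than the algebra: one must confirm that $a<1/\rho(A)$ indeed renders $aA$ Schur, so that both Gramians are the unique ALE solutions and the defining series converge. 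Once that is in place, the trace identity follows immediately from the cyclic property, which is consistent with the author's deferral of the routine manipulation to \cite{bernstein2009matrix}.
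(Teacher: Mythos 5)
Your proposal is correct and is essentially the standard duality argument that the paper itself omits by deferring to \cite{bernstein2009matrix}: the trace identity $\mathrm{tr}(QW_c)=\mathrm{tr}(PW_o^\gamma)$ via the two Stein equations and the cyclic property, with Schur stability of $aA$ (from $a<1/\rho(A)$) guaranteeing existence and uniqueness. Both the trace manipulation and the series cross-check are sound, so nothing further is needed.
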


Exploiting Lemma \ref{le:trace} and inspired from reinforcement learning methods \cite{lewis2010reinforcement}, the following lemma provides a data-driven procedure for estimating the infinite-horizon average output energy \eqref{eq:J2} for a fixed set of sensors.  The basis of this data-driven procedure lies in the temporal difference equations associated with the cost  \eqref{eq:H2a_int}.

\begin{lemma}\label{le:data_inf}
Consider system \eqref{eq:sys2} and let Assumption \ref{ass:obs} hold. Then, it holds that
\begin{equation}\label{eq:data1}
J_{\infty}(\mathbb{S}_{\gamma})=\mathrm{tr}\left(E_1^\mathrm{T}\bar{W}_o^\gamma E_1 \right),
\end{equation}
where $E_1=[I_m ~ 0_{m\times (Nr+(N-1)m)} ]^\mathrm{T}$, and
$\bar{W}_o^\gamma$ is a symmetric matrix that satisfies, for all $t\ge N$,
\begin{equation}\label{eq:data2}
{\Phi}^\mathrm{T}(t)\mathrm{vech}(\bar{W}_o^\gamma )+\norm{\tilde{y}(t)}^2=0,
\end{equation}
with
\begin{equation}\label{eq:data3}
\begin{split}
\Phi(t)=&H\Big(a^2((z(t+1)-E_1u(t))\\&\otimes (z(t+1)-E_1u(t)))-z(t)\otimes z(t)\Big)\Big),
\end{split}
\end{equation}
where $z(\cdot)$ was defined in Lemma \ref{le:output}. 
\end{lemma}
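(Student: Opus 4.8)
The plan is to exhibit an explicit symmetric matrix that simultaneously satisfies the temporal-difference relation \eqref{eq:data2} and yields the claimed trace identity \eqref{eq:data1}, and then verify both properties. The natural candidate is $\bar{W}_o^\gamma = M^\mathrm{T} W_o^\gamma M$, where $W_o^\gamma$ is the unique solution of the ALE \eqref{eq:disc_gram} supplied by Lemma \ref{le:trace} and $M$ is the reconstruction matrix of Lemma \ref{le:output}. With this choice $\bar{W}_o^\gamma$ is symmetric, and the two claims reduce to (i) a Bellman-type identity derived from the ALE and (ii) the bookkeeping identity $ME_1 = B$ that lets the unknown term $Bu(t)$ be absorbed into the data vector $z$.

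First I would derive the temporal-difference equation from \eqref{eq:disc_gram}. Writing the ALE as $a^2 A^\mathrm{T} W_o^\gamma A = W_o^\gamma - \tilde{C}^\mathrm{T}\tilde{C}$ with $\tilde{C} = \mathbb{S}_\gamma C$, and pre/post-multiplying by $x(t)$, gives $a^2 (Ax(t))^\mathrm{T} W_o^\gamma (Ax(t)) = x(t)^\mathrm{T} W_o^\gamma x(t) - \|\tilde{y}(t)\|^2$, since $\tilde{C}x(t) = \tilde{y}(t)$. Substituting $Ax(t) = x(t+1) - Bu(t)$ from the dynamics \eqref{eq:sys2} yields the energy-difference relation $a^2 (x(t+1)-Bu(t))^\mathrm{T} W_o^\gamma (x(t+1)-Bu(t)) - x(t)^\mathrm{T} W_o^\gamma x(t) = -\|\tilde{y}(t)\|^2$, valid at the state level for every $t$.

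Next I would move this relation into the measurable $z$-coordinates. For $t \ge N$ Lemma \ref{le:output} gives $x(t) = Mz(t)$ and $x(t+1) = Mz(t+1)$; moreover, since $E_1^\mathrm{T} z(t+1) = u(t)$ (the leading $m$-block of $z(t+1)$ under the stacking convention for $z$), I have $x(t+1) - Bu(t) = M(z(t+1) - E_1 u(t))$ provided $ME_1 = B$. I would establish $ME_1 = B$ by observing that $ME_1$ selects the first $m$ columns of $M_u = U_N - M_y T_N$; these equal $B$ because the leading block of $U_N$ is $B$ and the first block-column of $T_N$ is zero. Setting $\bar{W}_o^\gamma = M^\mathrm{T} W_o^\gamma M$ then turns the energy-difference relation into $a^2 (z(t+1)-E_1 u(t))^\mathrm{T} \bar{W}_o^\gamma (z(t+1)-E_1 u(t)) - z(t)^\mathrm{T}\bar{W}_o^\gamma z(t) = -\|\tilde{y}(t)\|^2$. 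Finally I would vectorize using $v \otimes v = \mathrm{vec}(vv^\mathrm{T})$, the definition $H(\mathrm{vec}(\cdot)) = \mathrm{vecs}(\cdot)$, and the pairing $\mathrm{vecs}(S)^\mathrm{T}\mathrm{vech}(P) = \mathrm{tr}(SP)$ for symmetric $S,P$; these collapse the left-hand side to $\Phi^\mathrm{T}(t)\mathrm{vech}(\bar{W}_o^\gamma)$ with $\Phi$ as in \eqref{eq:data3}, giving \eqref{eq:data2}. The cost identity \eqref{eq:data1} then follows immediately, since $\mathrm{tr}(E_1^\mathrm{T}\bar{W}_o^\gamma E_1) = \mathrm{tr}((ME_1)^\mathrm{T} W_o^\gamma (ME_1)) = \mathrm{tr}(B^\mathrm{T} W_o^\gamma B) = J_\infty(\mathbb{S}_\gamma)$ by Lemma \ref{le:trace}.

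I expect the main obstacle to be the structural bookkeeping rather than the Bellman step, which is standard. Verifying $ME_1 = B$ requires unpacking the definitions of $M_u, M_y, U_N, T_N$ and the block layout of $E_1$ and $z$, and care is needed to confirm that $E_1^\mathrm{T} z(t+1)$ indeed extracts $u(t)$. The second delicate point is aligning the three vectorization operators: one must check that applying $H$ to the $\mathrm{vec}$-level expression produces exactly the scaled half-vectorization whose contraction with $\mathrm{vech}(\bar{W}_o^\gamma)$ reproduces the quadratic forms, so that no factor-of-two discrepancy arises between $\mathrm{vecs}$ and $\mathrm{vech}$. I would also note that the lemma asserts the \emph{existence} of such a $\bar{W}_o^\gamma$; its recoverability from finitely many samples of $\Phi(t)$ hinges on a rank (persistence-of-excitation) condition, which is a separate matter addressed when the estimate is used for selection.
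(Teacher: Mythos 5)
Your proposal is correct and follows essentially the same route as the paper's proof: multiply the ALE of Lemma \ref{le:trace} by the state, substitute $Ax(t)=x(t+1)-Bu(t)$, pass to $z$-coordinates via $x(t)=Mz(t)$ and $ME_1=B$, define $\bar{W}_o^\gamma=M^\mathrm{T}W_o^\gamma M$, and vectorize. The only difference is that you spell out the verification of $ME_1=B$ and the vectorization bookkeeping, which the paper takes as given.
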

\begin{proof}
Multiplying both sides of equation \eqref{eq:disc_gram} of Lemma \ref{le:trace} with $x(t)$, $t\ge N$, and using \eqref{eq:sys2}, we obtain
\begin{equation}\label{eq:th1}
a^2x^\mathrm{T}(t)A^\mathrm{T}W_o^\gamma Ax(t)-x^\mathrm{T}(t)W_o^\gamma x(t)+\norm{\tilde{y}(t)}^2=0.
\end{equation}
Moreover, from \eqref{eq:sys2}, one has $Ax(t)=x(t+1)-Bu(t)$. Substituting this relation in \eqref{eq:th1} yields
\begin{multline}\label{eq:th2}
a^2(x(t+1)-Bu(t))^\mathrm{T}W_o^\gamma (x(t+1)-Bu(t))\\-x^\mathrm{T}(t)W_o^\gamma x(t)+\norm{\tilde{y}(t)}^2=0.
\end{multline}
Employing Lemma \ref{le:output}, we have $x(t+1)=Mz(t+1)$ and $x(t)=Mz(t)$. Moreover, notice that $B=ME_1$. Hence, \eqref{eq:th2} becomes
\begin{multline}\nonumber
a^2(Mz(t+1)-ME_1u(t))^\mathrm{T}W_o^\gamma (Mz(t+1)-ME_1u(t))\\-z^\mathrm{T}(t)M^\mathrm{T}W_o^\gamma Mz(t)+\norm{\tilde{y}(t)}^2=0.
\end{multline}
Defining the matrix $\bar{W}_o^\gamma=M^\mathrm{T}W_o^\gamma M$ yields
\begin{multline}\label{eq:th3}
a^2(z(t+1)-E_1u(t))^\mathrm{T}\bar{W}_o^\gamma (z(t+1)-E_1u(t))\\-z^\mathrm{T}(t)\bar{W}_o^\gamma z(t)+\norm{\tilde{y}(t)}^2=0.
\end{multline}
Note now that
\begin{align*}
&(z(t+1)-E_1u(t))^\mathrm{T}\bar{W}_o^\gamma (z(t+1)-E_1u(t))\\&=\Big((z(t+1){-}E_1u(t)){\otimes} (z(t+1){-}E_1u(t))\Big)^\mathrm{T}\mathrm{vec}(\bar{W}_o^\gamma),\\
&z^\mathrm{T}(t)\bar{W}_o^\gamma z(t)=(z(t)\otimes z(t))^\mathrm{T}\textrm{vec}(\bar{W}_o^\gamma ).
\end{align*}
Using these relations, we linearly parameterize \eqref{eq:th3} as
\begin{equation*}
\bar{\Phi}^\mathrm{T}(t)\textrm{vec}(\bar{W}_o^\gamma )+\norm{\tilde{y}(t)}^2=0,
\end{equation*}
where $\bar{\Phi}(t)=a^2\Big((z(t+1)-E_1u(t))\otimes (z(t+1)-E_1u(t))\Big)-z(t)\otimes z(t)$. Applying half-vectorization thanks to the symmetricity of $\bar{W}_o^\gamma$ gives equations \eqref{eq:data2}-\eqref{eq:data3}. Finally, recall that $\bar{W}_o^\gamma=M^\mathrm{T}{W}_o^\gamma M$, and $ME_1=B$. Therefore, it follows that $\textrm{tr}(E_1^\mathrm{T}\bar{W}_o^\gamma E_1)=\textrm{tr}(E_1^\mathrm{T}M^\mathrm{T}{W}_o^\gamma M E_1)=\textrm{tr}(B^\mathrm{T}{W}_o^\gamma B)=J_{\infty}(\mathbb{S}_\gamma)$, 
with the last equality following from Lemma \ref{le:trace}. This yields equation \eqref{eq:data1}, and concludes the proof. \frQED
\end{proof}

Note now that equation \eqref{eq:data2} is linearly parameterized with respect to $\bar{W}_o^\gamma$. As a result, one can solve it for $\bar{W}_o^\gamma$ with a least-squares procedure, provided a number of sufficiently rich input-output measurements is available. This claim is summarized in the following Corollary.

\begin{corollary}\label{cor:1}
Let $t_0,t_1,\ldots,t_k$ be measurement time instants, and let Assumption \ref{ass:obs} hold. Denote
\begin{align*}
\Psi&:=\begin{bmatrix}\Phi(t_0) & \Phi(t_1) & \ldots & \Phi(t_k) \end{bmatrix},\\
Y_{\gamma}&:=\begin{bmatrix}\norm{\tilde{y}(t_0)}^2 & \norm{\tilde{y}(t_1)}^2 & \ldots & \norm{\tilde{y}(t_k)}^2 \end{bmatrix}^\mathrm{T}.
\end{align*}
If $\Psi$ has full row rank, then
\begin{equation}\label{eq:Wo_sol}
\mathrm{vech}(\bar{W}_o^\gamma )=-(\Psi^\mathrm{T})^\dagger Y_{\gamma}.
\end{equation}
\end{corollary}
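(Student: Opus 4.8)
The plan is to read Corollary \ref{cor:1} as a straightforward consequence of Lemma \ref{le:data_inf}: equation \eqref{eq:data2} holds at \emph{every} valid time instant $t \ge N$, so stacking it across the measurement instants $t_0, t_1, \ldots, t_k$ produces a consistent linear system in the single unknown $\mathrm{vech}(\bar{W}_o^\gamma)$, which I then solve by pseudoinverse.

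First I would observe that, by Lemma \ref{le:data_inf}, the same symmetric matrix $\bar{W}_o^\gamma$ satisfies ${\Phi}^\mathrm{T}(t)\mathrm{vech}(\bar{W}_o^\gamma) + \norm{\tilde{y}(t)}^2 = 0$ for each $t \ge N$. Writing this identity at $t = t_0, t_1, \ldots, t_k$ and collecting the $k+1$ scalar equations, I would stack them using the definitions of $\Psi$ and $Y_\gamma$ given in the statement. Since each column of $\Psi$ is $\Phi(t_i)$ and each entry of $Y_\gamma$ is $\norm{\tilde y(t_i)}^2$, the stacked system reads compactly as
\begin{equation*}
\Psi^\mathrm{T}\mathrm{vech}(\bar{W}_o^\gamma) + Y_\gamma = 0,
\end{equation*}
equivalently $\Psi^\mathrm{T}\mathrm{vech}(\bar{W}_o^\gamma) = -Y_\gamma$. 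The key point is that this system is guaranteed \emph{consistent}, because a genuine solution $\bar{W}_o^\gamma = M^\mathrm{T} W_o^\gamma M$ exists by construction in Lemma \ref{le:data_inf}; I am not fitting an approximate model but recovering an exact one.

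Next I would invoke the full-row-rank hypothesis on $\Psi$. Since $\Psi$ has full row rank, $\Psi^\mathrm{T}$ has full column rank, so $\Psi^\mathrm{T}$ is injective and the consistent system $\Psi^\mathrm{T}\mathrm{vech}(\bar{W}_o^\gamma) = -Y_\gamma$ has a \emph{unique} solution. Left-multiplying by the pseudoinverse $(\Psi^\mathrm{T})^\dagger$ and using the fact that $(\Psi^\mathrm{T})^\dagger \Psi^\mathrm{T} = I$ when $\Psi^\mathrm{T}$ has full column rank, I recover exactly
\begin{equation*}
\mathrm{vech}(\bar{W}_o^\gamma) = -(\Psi^\mathrm{T})^\dagger Y_\gamma,
\end{equation*}
which is \eqref{eq:Wo_sol}. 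This completes the argument.

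I do not expect a serious obstacle here, since the result is essentially a linear-algebra repackaging of Lemma \ref{le:data_inf}. The one point deserving care is the \emph{consistency} of the stacked system together with the correct interpretation of the pseudoinverse: the identity $(\Psi^\mathrm{T})^\dagger \Psi^\mathrm{T} = I$ requires $\Psi^\mathrm{T}$ to have full column rank (i.e.\ $\Psi$ full row rank, precisely the hypothesis), and uniqueness of the recovered $\mathrm{vech}(\bar{W}_o^\gamma)$ relies on a true solution existing so that the least-squares solution coincides with the exact one. I would therefore emphasize that the full-row-rank condition on $\Psi$ is what both guarantees the pseudoinverse formula returns the exact half-vectorization and rules out spurious solutions, rather than merely providing a best fit.
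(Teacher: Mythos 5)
Your proof is correct and follows essentially the same route as the paper's: stack the scalar identities \eqref{eq:data2} over the measurement instants to get $\Psi^\mathrm{T}\mathrm{vech}(\bar{W}_o^\gamma) = -Y_\gamma$, then invert using the full-row-rank hypothesis on $\Psi$. The paper states this in two lines; your added remarks on consistency and on $(\Psi^\mathrm{T})^\dagger\Psi^\mathrm{T}=I$ merely make explicit what the paper leaves implicit.
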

\begin{proof}
Stacking equations of the form \eqref{eq:data2} for $t=t_1,\ldots,t_k$ in a vertical matrix, we obtain
\begin{equation}\label{eq:bf_ls}
\Psi^\textrm{T}\mathrm{vech}(\bar{W}_o^\gamma )=-Y_{\gamma}.
\end{equation}
Hence, the result follows since $\Psi$ has full row rank. \frQED
\end{proof}
Combining the assumptions and conditions of Lemma \ref{le:data_inf} and Corollary \ref{cor:1}, we finally obtain a complete data-driven expression for the cost function $J_{\infty}(\mathbb{S}_{\gamma})$.
\begin{corollary}\label{cor:2}
Let Assumption \ref{ass:obs} hold and $\Psi$ have full row rank.  Then:
\begin{equation}\label{eq:Jdata}
J_{\infty}(\mathbb{S}_{\gamma})=-\mathrm{tr}\left(E_1^\mathrm{T}\mathrm{vech}^{-1}\left( ((\Psi^\mathrm{T})^\dagger Y_{\gamma}\right) E_1 \right). \frqed
\end{equation}

\end{corollary}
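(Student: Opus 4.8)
The plan is to prove Corollary \ref{cor:2} as a direct composition of Lemma \ref{le:data_inf} and Corollary \ref{cor:1}, since together they supply the two halves needed to express $J_\infty(\mathbb{S}_\gamma)$ purely in terms of collected data. First I would recall from Lemma \ref{le:data_inf} that, under Assumption \ref{ass:obs}, the cost admits the closed form $J_\infty(\mathbb{S}_\gamma)=\mathrm{tr}(E_1^\mathrm{T}\bar{W}_o^\gamma E_1)$, where $\bar{W}_o^\gamma$ is the symmetric matrix satisfying the temporal-difference identity \eqref{eq:data2} for all $t\ge N$. This reduces the task to identifying $\bar{W}_o^\gamma$ from measurements alone.

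Next I would invoke Corollary \ref{cor:1}: stacking \eqref{eq:data2} over the measurement instants $t_0,\ldots,t_k$ yields the linear system $\Psi^\mathrm{T}\mathrm{vech}(\bar{W}_o^\gamma)=-Y_\gamma$, whose unique least-squares solution under the full-row-rank hypothesis on $\Psi$ is $\mathrm{vech}(\bar{W}_o^\gamma)=-(\Psi^\mathrm{T})^\dagger Y_\gamma$. Applying $\mathrm{vech}^{-1}$ then recovers $\bar{W}_o^\gamma=\mathrm{vech}^{-1}(-(\Psi^\mathrm{T})^\dagger Y_\gamma)$ entirely from the data matrices $\Psi$ and $Y_\gamma$, neither of which involves the unknown system matrices $A,B,C$.

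The final step is to substitute this data-driven expression for $\bar{W}_o^\gamma$ into the trace formula from Lemma \ref{le:data_inf} and push the sign outside. Since $\mathrm{vech}^{-1}$ and the map $X\mapsto\mathrm{tr}(E_1^\mathrm{T} X E_1)$ are both linear, the negative sign factors out, giving $J_\infty(\mathbb{S}_\gamma)=-\mathrm{tr}(E_1^\mathrm{T}\mathrm{vech}^{-1}((\Psi^\mathrm{T})^\dagger Y_\gamma)E_1)$, which is exactly \eqref{eq:Jdata}. I do not anticipate a genuine obstacle here, as the argument is essentially bookkeeping; the only point requiring care is the consistency between the two characterizations of $\bar{W}_o^\gamma$, namely that the least-squares solution of Corollary \ref{cor:1} coincides with the matrix guaranteed by Lemma \ref{le:data_inf}. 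This is ensured precisely by the full-row-rank condition on $\Psi$, which renders the solution of $\Psi^\mathrm{T}\mathrm{vech}(\bar{W}_o^\gamma)=-Y_\gamma$ unique, so the two objects must agree.
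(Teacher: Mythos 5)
Your proposal is correct and matches the paper's own (unstated) argument exactly: the paper presents Corollary~\ref{cor:2} as an immediate combination of Lemma~\ref{le:data_inf} and Corollary~\ref{cor:1}, which is precisely the substitution-plus-linearity bookkeeping you carry out. Your closing remark that full row rank of $\Psi$ forces the solution of $\Psi^\mathrm{T}\mathrm{vech}(\bar{W}_o^\gamma)=-Y_\gamma$ to be unique, so the recovered matrix is the one from Lemma~\ref{le:data_inf}, is the only point of substance and you handle it correctly.
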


It is important to note that Corollary \ref{cor:1} builds on the assumption that $\Psi\in\mathbb{R}^{(Nm+Nr)(Nm+Nr+1)/2\times k}$  has full row rank.  While this property is attainable when $Nr=n$, it is not when $Nr>n$, where $N$ is the number of past output data, $r$ the size of the output vector $\hat{y}$, and $n$ the size of the state vector. The underlying reason is that the system's output 
cannot provide more information than the information contained in the system's state. Therefore, when the total dimension $Nr$ of the past output data exceeds the dimension $n$ of the system state, the past output data turns linearly dependent and $\Psi$ becomes rank deficient. Accordingly,  the maximal rank $\Psi$ can achieve is $(Nm+n)(Nm+n+1)/2$, even though its minimal dimension is $(Nm+Nr)(Nm+Nr+1)/2$.

The aforementioned detail can be problematic in practice, particularly when the fraction $\frac{n}{r}$ is not an integer number, or when $N$ is chosen to be conservatively large. The main issue is that, in these cases, equation \eqref{eq:bf_ls} has infinitely many solutions, and only one of them is equal to the matrix of interest $\bar{W}_o^\gamma$. Nonetheless, despite the fact that formula \eqref{eq:Wo_sol} may not be valid in extracting $\bar{W}_o^\gamma$ in these cases, the following theorem shows that it still allows one to accurately estimate the infinite-horizon sensor selection metric $J_{\infty}(\mathbb{S}_{\gamma})$, for a fixed set of sensors.

\begin{theorem}\label{th:2}
Let Assumption \ref{ass:obs} hold and  $\hat{W}_o^\gamma$ satisfy
\begin{equation}\label{eq:Wo_sol_pinv}
\mathrm{vech}(\hat{W}_o^\gamma )=-(\Psi^\mathrm{T})^\dagger Y_{\gamma}.
\end{equation}
If  $\mathrm{rank}({\Psi})=(Nm+n)(Nm+n+1)/2$, then 
\begin{equation}\label{eq:costWhat}
J_{\infty}(\mathbb{S}_{\gamma})=\mathrm{tr}\left(E_1^\mathrm{T}\hat{W}_o^\gamma E_1 \right).
\end{equation}
\end{theorem}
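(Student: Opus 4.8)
The plan is to exploit the structure of the least-squares problem: even though the rank deficiency of $\Psi$ prevents the pseudoinverse in \eqref{eq:Wo_sol_pinv} from returning the true matrix $\bar W_o^\gamma = M^\mathrm{T} W_o^\gamma M$ of Lemma \ref{le:data_inf}, I will show it returns it up to an error that the functional $V \mapsto \mathrm{tr}(E_1^\mathrm{T} V E_1)$ cannot see. Since $\bar W_o^\gamma$ satisfies \eqref{eq:data2} at every sample $t_i$, it solves the consistent system $\Psi^\mathrm{T}\mathrm{vech}(W) = -Y_\gamma$; substituting $Y_\gamma = -\Psi^\mathrm{T}\mathrm{vech}(\bar W_o^\gamma)$ into \eqref{eq:Wo_sol_pinv} gives $\mathrm{vech}(\hat W_o^\gamma) = (\Psi^\mathrm{T})^\dagger \Psi^\mathrm{T}\mathrm{vech}(\bar W_o^\gamma)$. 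Hence the error $\Delta := \mathrm{vech}(\bar W_o^\gamma) - \mathrm{vech}(\hat W_o^\gamma) = (I - (\Psi^\mathrm{T})^\dagger\Psi^\mathrm{T})\mathrm{vech}(\bar W_o^\gamma)$ lies in $\ker(\Psi^\mathrm{T})$. Because $\mathrm{tr}(E_1^\mathrm{T}\bar W_o^\gamma E_1) = J_\infty(\mathbb{S}_\gamma)$ by \eqref{eq:data1}, and since $\mathrm{tr}(E_1^\mathrm{T}(\bar W_o^\gamma - \hat W_o^\gamma) E_1) = \mathrm{vecs}(E_1 E_1^\mathrm{T})^\mathrm{T}\Delta$ by the duality $\mathrm{tr}(AB) = \mathrm{vecs}(A)^\mathrm{T}\mathrm{vech}(B)$ valid for symmetric $A,B$, proving \eqref{eq:costWhat} reduces to showing $\mathrm{vecs}(E_1 E_1^\mathrm{T}) \perp \Delta$. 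As $\Delta \in \ker(\Psi^\mathrm{T})$, it suffices to prove the sharper claim $\mathrm{vecs}(E_1 E_1^\mathrm{T}) \in \mathrm{range}(\Psi) = \ker(\Psi^\mathrm{T})^\perp$.

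The core of the argument is then to locate $\mathrm{range}(\Psi)$. First I would observe, using $v \otimes v = \mathrm{vec}(vv^\mathrm{T})$ and $H = \mathrm{vecs}\circ\mathrm{vec}^{-1}$, that each column reads $\Phi(t) = a^2\,\mathrm{vecs}(\xi(t)\xi(t)^\mathrm{T}) - \mathrm{vecs}(z(t)z(t)^\mathrm{T})$, where $\xi(t) := z(t+1) - E_1 u(t)$. I will show that all the regressors $z(t), \xi(t)$, together with the columns of $E_1$, live in one subspace $\mathcal V \subseteq \mathbb R^{Nm+Nr}$ of dimension at most $Nm+n$. Let $\mathcal V$ be the image of the linear lifting map $(\bar u, \chi) \mapsto [\bar u^\mathrm{T}\ (V_N\chi + T_N\bar u)^\mathrm{T}]^\mathrm{T}$, $\bar u \in \mathbb R^{Nm}$, $\chi \in \mathbb R^n$, which is exactly the input--output relation $\hat y(t-1{:}t-N) = V_N x(t-N) + T_N u(t-1{:}t-N)$ obtained by iterating \eqref{eq:sys2}. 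Then $\dim\mathcal V \le Nm+n$, and $z(t) \in \mathcal V$ for every $t \ge N$ by construction. Taking $\chi = 0$ and $\bar u$ supported on its first block shows $\mathrm{range}(E_1) \subseteq \mathcal V$, so $E_1 E_1^\mathrm{T}$ is symmetric with range in $\mathcal V$.

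I expect the main obstacle to be the remaining inclusion $\xi(t) \in \mathcal V$, which is where the causal, strictly block-upper-triangular structure of $T_N$ is decisive. The most recent input $u(t)$ occupies the first input block of $z(t+1)$ but, since the first block-column of $T_N$ vanishes, it leaves the output block of $z(t+1)$ untouched. Writing $z(t+1) = [\,u(t{:}t-N+1)^\mathrm{T}\ (V_N x(t-N+1) + T_N u(t{:}t-N+1))^\mathrm{T}]^\mathrm{T}$ and subtracting $E_1 u(t)$ only zeroes that first input block; using $T_N u(t{:}t-N+1) = T_N u'$ with $u' := [\,0^\mathrm{T}\ u(t-1)^\mathrm{T}\ \cdots\ u(t-N+1)^\mathrm{T}]^\mathrm{T}$, I obtain that $\xi(t)$ is precisely the lift of $(u', x(t-N+1))$, hence $\xi(t) \in \mathcal V$.

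With the three inclusions established, the proof closes by a dimension count. Every column $\Phi(t)$ is the $\mathrm{vecs}$-image of a symmetric matrix with range in $\mathcal V$, so $\mathrm{range}(\Psi) \subseteq \mathrm{vecs}(\mathrm{Sym}(\mathcal V))$, where $\mathrm{Sym}(\mathcal V)$ denotes the symmetric matrices with range in $\mathcal V$; since $\mathrm{vecs}$ is a linear isomorphism on symmetric matrices, $\dim\mathrm{vecs}(\mathrm{Sym}(\mathcal V)) = \tfrac{1}{2}(\dim\mathcal V)(\dim\mathcal V+1) \le \tfrac{1}{2}(Nm+n)(Nm+n+1)$. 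The hypothesis $\mathrm{rank}(\Psi) = \tfrac{1}{2}(Nm+n)(Nm+n+1)$ then forces equality throughout, giving $\mathrm{range}(\Psi) = \mathrm{vecs}(\mathrm{Sym}(\mathcal V))$. As $E_1 E_1^\mathrm{T} \in \mathrm{Sym}(\mathcal V)$, we conclude $\mathrm{vecs}(E_1 E_1^\mathrm{T}) \in \mathrm{range}(\Psi)$, whence $\mathrm{vecs}(E_1 E_1^\mathrm{T})^\mathrm{T}\Delta = 0$ and $\mathrm{tr}(E_1^\mathrm{T}\hat W_o^\gamma E_1) = \mathrm{tr}(E_1^\mathrm{T}\bar W_o^\gamma E_1) = J_\infty(\mathbb{S}_\gamma)$, which is \eqref{eq:costWhat}.
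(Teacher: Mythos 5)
Your proof is correct, and it reaches \eqref{eq:costWhat} by a genuinely different route from the paper -- essentially the dual of the paper's argument. The paper works on the kernel side: it forms the error matrix $\tilde{W}_o^\gamma=\bar{W}_o^\gamma-\hat{W}_o^\gamma$, pushes the homogeneous equations forward to the reduced coordinates $\bar{z}(t)=\Lambda^\dagger z(t)$ via the factorization $Q_2\Psi=(\Lambda\otimes\Lambda)Q_1\tilde{\Psi}$, and uses rank inequalities on that Kronecker identity to show the reduced regressor matrix $\tilde{\Psi}$ has full row rank, whence $\Lambda^\mathrm{T}\tilde{W}_o^\gamma\Lambda=0$ and in particular $E_1^\mathrm{T}\tilde{W}_o^\gamma E_1=0$. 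You instead work on the range side: you characterize $\mathrm{range}(\Psi)$ as $\mathrm{vecs}(\mathrm{Sym}(\mathcal{V}))$ with $\mathcal{V}=\mathrm{range}(\Lambda)$ -- the same subspace the paper constructs, your lifting map being exactly the paper's $\Lambda$ up to a block permutation -- and then invoke the pairing $\mathrm{tr}(AB)=\mathrm{vecs}(A)^\mathrm{T}\mathrm{vech}(B)$ to conclude that the functional $\mathrm{tr}(E_1^\mathrm{T}(\cdot)E_1)$ is identifiable because $\mathrm{vecs}(E_1E_1^\mathrm{T})\in\mathrm{range}(\Psi)=\ker(\Psi^\mathrm{T})^\perp$. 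The two key inclusions ($z(t),\xi(t)\in\mathcal{V}$ via the input--output relation and the vanishing first block-column of $T_N$, and $\mathrm{range}(E_1)\subseteq\mathcal{V}$) are the same facts the paper uses, but your dimension count on symmetric matrices with range in $\mathcal{V}$ replaces the paper's $Q_1,Q_2$ bookkeeping and is arguably cleaner; it also makes explicit precisely which linear functionals of $\bar{W}_o^\gamma$ the data determine, and shows the rank hypothesis by itself forces $\dim\mathcal{V}=Nm+n$ without separately invoking observability at that step. What the paper's version buys in exchange is the stronger matrix identity $\Lambda^\mathrm{T}\tilde{W}_o^\gamma\Lambda=0$ (reused verbatim in the induction of Theorem~\ref{th:4} and validated numerically in Fig.~\ref{fig:Wtilde}); your argument as written only yields the scalar conclusion $\mathrm{tr}(E_1^\mathrm{T}\tilde{W}_o^\gamma E_1)=0$, which suffices here, though pairing against $\mathrm{vecs}(E_1vv^\mathrm{T}E_1^\mathrm{T})$ for arbitrary $v$ would recover the matrix statement as well.
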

\begin{proof}
Since $\hat{W}_o^\gamma$ satisfies \eqref{eq:Wo_sol_pinv}, it follows that $\Psi^\textrm{T}\mathrm{vech}(\hat{W}_o^\gamma )=-Y_{\gamma}.$
Devectorizing this equation and de-stacking it for all time instances, one can bring it back to the form \eqref{eq:th3}, i.e., the following equation holds for all $t=t_1,\ldots,t_k$:
\begin{multline}\label{eq:pinv1}
a^2(z(t+1)-E_1u(t))^\mathrm{T}\hat{W}_o^\gamma (z(t+1)-E_1u(t))\\-z^\mathrm{T}(t)\hat{W}_o^\gamma z(t)+\norm{\tilde{y}(t)}^2=0.
\end{multline}
Define the error $\tilde{W}_o^\gamma=\bar{W}_o^\gamma-\hat{W}_o^\gamma$. Subtracting \eqref{eq:pinv1} from \eqref{eq:th3}, we obtain for all $t=t_1,\ldots,t_k$:
\begin{multline}\label{eq:pinv2}
a^2(z(t+1)-E_1u(t))^\mathrm{T}\tilde{W}_o^\gamma (z(t+1)-E_1u(t))\\-z^\mathrm{T}(t)\tilde{W}_o^\gamma z(t)=0.
\end{multline}
Note now that by using the telescopic sum on \eqref{eq:sys}, the list of outputs from $t-1$ to $t-N$ can be expressed as
\begin{equation*}
\hat{y}(t-1:t-N)=V_Nx(t-N)+\bar{T}_Nu(t-2:t-N),
\end{equation*}
where $\bar{T}_N$ is equal to $T_N$ with the first $m$ (zero) columns removed, and
where the matrices $V_N, ~T_N$ were defined in Lemma \ref{le:output}. Therefore, if we define
\begin{align}\label{eq:zL}
\hspace{-3mm}\bar{z}(t){=}\begin{bmatrix}u(t-1) \\ u(t-2:t-N) \\ x(t-N) \end{bmatrix},
\Lambda{=} \begin{bmatrix}I_m & 0 & 0 \\ 0 & I_{(N-1)m} & 0 \\ 0 & \bar{T}_N & V_N \end{bmatrix},\hspace{-2mm}
\end{align}
 we can write $z(t)=\Lambda \bar{z}(t)$. Hence, \eqref{eq:pinv2} turns into
\begin{multline}\label{eq:pinv3}
a^2(\Lambda \bar{z}(t+1)-{E}_1u(t))^\mathrm{T}\tilde{W}_o^\gamma (\bar{z}(t+1)-E_1u(t))\\-\bar{z}^\mathrm{T}(t)\Lambda^\textrm{T}(t)\tilde{W}_o^\gamma\Lambda \bar{z}(t)=0.
\end{multline}
In addition, note that we have $E_1=\Lambda\bar{E}_1$, where 
\begin{equation}\label{eq:barE1}
\hspace{-3mm}\bar{E}_1{=}
\begin{bmatrix}I_m \\ 0_{((N-1)m+n)\times m}   \end{bmatrix}.\hspace{-1mm}
 \end{equation}
 Using this, \eqref{eq:pinv3} can be written for all $t=t_1,\ldots,t_k$ as
\begin{multline}\label{eq:pinv4}
a^2( \bar{z}(t+1)-\bar{E}_1u(t))^\mathrm{T}\Lambda^\textrm{T}\tilde{W}_o^\gamma\Lambda (\bar{z}(t+1)-\bar{E}_1u(t))\\-\bar{z}^\mathrm{T}(t)\Lambda^\textrm{T}\tilde{W}_o^\gamma\Lambda \bar{z}(t)=0.
\end{multline}
Vectorizing this equation we obtain  $\tilde{\Phi}^\textrm{T}(t)\textrm{vech}(\Lambda^\textrm{T} \tilde{W}_o^\gamma \Lambda)\allowbreak=0$ for all $t=t_1,\ldots,t_k$,
where
\begin{multline}\label{eq:tPhi}
\tilde{\Phi}(t)=H\Big(a^2((\bar{z}(t+1)-\bar{E}_1u(t))\\\otimes (\bar{z}(t+1)-\bar{E}_1u(t)))-\bar{z}(t)\otimes \bar{z}(t)\Big).
\end{multline}
Defining $\tilde{\Psi}:=\begin{bmatrix}\tilde{\Phi}(t_0) & \tilde{\Phi}(t_1) & \ldots & \tilde{\Phi}(t_k) \end{bmatrix}$, we obtain
\begin{equation}\label{eq:barPsi}
\tilde{\Psi}^\textrm{T}\textrm{vech}(\Lambda^\textrm{T} \tilde{W}_o^\gamma \Lambda)=0.
\end{equation}
At this point, it is useful to notice from \eqref{eq:data3} and \eqref{eq:tPhi}, and from the fact that $z(t)=\Lambda\bar{z}(t)$ and $E_1=\Lambda\bar{E}_1$, that
\begin{equation*}
Q_2{\Phi}(t)=(\Lambda\otimes\Lambda)  Q_1\tilde{\Phi}(t),
\end{equation*}
where $Q_1\in\mathbb{R}^{(Nm+n)^2\times(Nm+n)(Nm+n+1)/2}$
 and $Q_2\in\mathbb{R}^{(Nm+Nr)^2\times(Nm+Nr)(Nm+Nr+1)/2}$ are
the matrix expressions for the (full column rank) linear operator $H^{-1}(\cdot)$  on the corresponding vector spaces. Accordingly,
\begin{equation}\label{eq:tildePsi}
Q_2{\Psi}=(\Lambda\otimes\Lambda)  Q_1 \tilde{\Psi}.
\end{equation}
Since $(A,\hat{C})$ is observable, $V_N$ has full column rank, hence $\Lambda$ has full column rank, and finally so does $(\Lambda \otimes \Lambda)$. Moreover, since $Q_1$ has full column rank, it follows that $(\Lambda \otimes \Lambda)Q_1$ also has full column rank, so that $\rank ((\Lambda\otimes\Lambda)  Q_1)=(Nm+n)(Nm+n+1)/2$. In addition, $\textrm{rank}(\Psi)=(Nm+n)(Nm+n+1)/2$ by assumption, and thus also $\textrm{rank}(Q_2\Psi)=(Nm+n)(Nm+n+1)/2$. Thus, by applying the rank inequality in \eqref{eq:tildePsi}, we get
\begin{multline}\nonumber
(Nm+n)(Nm+n+1)/2=\textrm{rank}(Q_2\Psi)\\\le\min\{\textrm{rank}((\Lambda \otimes \Lambda)Q_1),\textrm{rank}(\tilde{\Psi}) \}\\=\min\{(Nm+n)(Nm+n+1)/2,\textrm{rank}(\tilde{\Psi}) \}.
\end{multline}
This implies $\textrm{rank}(\tilde{\Psi})\ge (Nm+n)(Nm+n+1)/2$, and  thus $\textrm{rank}(\tilde{\Psi})=(Nm+n)(Nm+n+1)/2$ since $\tilde{\Psi}\in\mathbb{R}^{(Nm+n)(Nm+n+1)/2\times k}$. Hence $\tilde{\Psi}$ has full row rank, hence  \eqref{eq:barPsi} has the unique solution $\textrm{vech}(\Lambda^\textrm{T} \tilde{W}_o^\gamma \Lambda)=0$, hence $\Lambda^\textrm{T} \tilde{W}_o^\gamma \Lambda=0.$
 Accordingly, one has $\bar{E}_1^\textrm{T}\Lambda^\textrm{T} \tilde{W}_o^\gamma \Lambda\bar{E}_1=0$, and thus $E_1^\textrm{T}\tilde{W}_o^\gamma {E}_1=0$ since $E_1=\Lambda \bar{E}_1$. Therefore, we conclude
$E_1^\textrm{T}\hat{W}_o^\gamma {E}_1=E_1^\textrm{T}\bar{W}_o^\gamma {E}_1, $
and thus \eqref{eq:costWhat} follows from Lemma \ref{le:data_inf} and \eqref{eq:data1}. \frQED
\end{proof}

Thanks to Theorem \ref{th:2}, we conclude that the sensor selection cost $J_{\infty}(\mathbb{S}_\gamma)$ can be computed via the pseudoinverse of $\Psi$, even if $\Psi$ does not have full row rank\footnote{Having a data matrix $\Psi$ with full row rank may still be desirable as it can enable a faster computation of $\Psi^\dagger$.}. This is still subject to a (less restrictive) rank condition on $\Psi$, but this condition -- unlike the full row rank one -- is attainable in practice when $Nr>n$.

\subsection{Data-Driven Estimation of Finite-Horizon Average Output Energy}

While infinite-horizon metrics like \eqref{eq:trinf} have the advantage of being computable from static ALEs, they may not be applicable in cases where the system is deployed over a fixed time interval, or when a sufficiently small discount factor $a$ is not a priori known. In such cases, finite-horizon indices like \eqref{eq:trfin} are more desirable to optimize, and in this section we propose a procedure for their data-driven estimation.

Denote as $J_{T}(\mathbb{S}_{\gamma})=\textrm{tr}(\mathbb{S}_{\gamma}CW_c(T)C^{\mathrm{T}}\mathbb{S}_{\gamma}^\mathrm{T})$ the finite-horizon average output energy \eqref{eq:sensor_problem_trfin}, which we want to estimate or a fixed sensor set $\mathbb{S_{\gamma}}$ using input-output data. The following Lemma will facilitate this goal and is stated without proof as it follows from \cite{bernstein2009matrix}. 

\begin{lemma}\label{le:model_fin}
It holds that
\begin{equation}\label{eq:J2modelfin}
J_{T}(\mathbb{S}_{\gamma})=\mathrm{tr}(B^\mathrm{T} W_o^\gamma(T) B),
\end{equation}
where $W_o^\gamma(T)$ is obtained by propagating the following DLE over $t\in\{0,
\ldots,T-1\}$
\begin{equation}\label{eq:gram_fin}
A^\mathrm{T}W_o^\gamma(t)A+C^{\mathrm{T}}\mathbb{S}_{\gamma}^\mathrm{T}\mathbb{S}_{\gamma}C=W_o^\gamma(t+1),~W_o^\gamma(0)=0.
\end{equation}
\end{lemma}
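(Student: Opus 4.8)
The plan is to establish the claimed identity in two stages: first reduce $J_T(\mathbb{S}_\gamma)$ to a trace involving a finite-horizon observability Gramian by exploiting the cyclic property of the trace, and then verify that this Gramian is exactly the matrix produced by propagating the DLE \eqref{eq:gram_fin}. This mirrors the argument behind the infinite-horizon Lemma \ref{le:trace}, with the ALE replaced by a finite recursion.

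First, I would write the finite-horizon controllability Gramian of $(A,B)$ in closed form, $W_c(T)=\sum_{k=0}^{T-1}A^kBB^\mathrm{T}(A^k)^\mathrm{T}$, and substitute it into the definition $J_T(\mathbb{S}_\gamma)=\mathrm{tr}(\tilde{C}W_c(T)\tilde{C}^\mathrm{T})$ with $\tilde{C}=\mathbb{S}_\gamma C$. Expanding the sum and applying the cyclic invariance of the trace to each summand moves the $B$-factors to the outside, since $\mathrm{tr}(\tilde{C}A^kBB^\mathrm{T}(A^k)^\mathrm{T}\tilde{C}^\mathrm{T})=\mathrm{tr}(B^\mathrm{T}(A^k)^\mathrm{T}\tilde{C}^\mathrm{T}\tilde{C}A^kB)$. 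Reassembling the sum inside a single trace yields $J_T(\mathbb{S}_\gamma)=\mathrm{tr}(B^\mathrm{T}W_o^\gamma(T)B)$ with $W_o^\gamma(T):=\sum_{k=0}^{T-1}(A^k)^\mathrm{T}\tilde{C}^\mathrm{T}\tilde{C}A^k$ and $\tilde{C}^\mathrm{T}\tilde{C}=C^\mathrm{T}\mathbb{S}_\gamma^\mathrm{T}\mathbb{S}_\gamma C$.

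Second, I would confirm that this closed-form $W_o^\gamma(T)$ coincides with the DLE solution. A short induction on $t$ does the job: the empty sum gives $W_o^\gamma(0)=0$, and splitting off the $k=0$ term and then factoring $A^\mathrm{T}(\cdot)A$ out of the remaining summands gives $W_o^\gamma(t+1)=A^\mathrm{T}W_o^\gamma(t)A+C^\mathrm{T}\mathbb{S}_\gamma^\mathrm{T}\mathbb{S}_\gamma C$, which is precisely \eqref{eq:gram_fin}. Since the DLE propagates forward uniquely from its initial condition, its value at $t=T$ equals the closed-form $W_o^\gamma(T)$, completing the identity.

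There is no genuine obstacle here — the result is the finite-horizon dual of Lemma \ref{le:trace} and follows from standard linear systems theory, as the authors note. The only points requiring care are bookkeeping: aligning the index ranges of $W_c(T)$ and $W_o^\gamma(T)$ so that both run over $k=0,\ldots,T-1$, and matching the offset in the recursion (note that $W_o^\gamma(t)$ accumulates the powers $A^0,\ldots,A^{t-1}$, so the final index $T$ is correct). Alternatively, one could bypass $W_c(T)$ entirely and derive the expression directly from $\norm{G_T}_2^2=\mathbb{E}[\sum_{t=0}^T\norm{y(t)}^2]$: under $x_0=0$ and $u(0)\sim\mathcal{N}(0,I_m)$ the response is $y(t)=\tilde{C}A^{t-1}Bu(0)$, and taking the expectation with $\mathbb{E}[u(0)u(0)^\mathrm{T}]=I_m$ reproduces $\mathrm{tr}(B^\mathrm{T}W_o^\gamma(T)B)$ after the same trace manipulation.
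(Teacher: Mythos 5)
Your proof is correct and is exactly the standard argument the paper invokes by citing \cite{bernstein2009matrix} (the lemma is stated without proof there): writing $W_c(T)=\sum_{k=0}^{T-1}A^kBB^\mathrm{T}(A^k)^\mathrm{T}$, cycling the trace to obtain $W_o^\gamma(T)=\sum_{k=0}^{T-1}(A^k)^\mathrm{T}C^\mathrm{T}\mathbb{S}_\gamma^\mathrm{T}\mathbb{S}_\gamma CA^k$, and checking by induction that this sum satisfies the DLE with $W_o^\gamma(0)=0$. Your index bookkeeping is consistent (the only cosmetic caveat is that $y(t)=\tilde{C}A^{t-1}Bu(0)$ holds for $t\ge1$ with $y(0)=0$, which your sum already accounts for), so nothing further is needed.
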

Note now that the DLE \eqref{eq:gram_fin} depends on the system matrices $A$, $C$, while a direct dependence on $B$ is present in the trace expression \eqref{eq:J2modelfin}. The following Lemma recasts \eqref{eq:gram_fin} into a data-driven, time-varying Lyapunov-like equation, which can be propagated without knowledge of these unknown matrices. In addition, the data used in that direction do not need to have the same timestamp as the corresponding time variable in \eqref{eq:gram_fin}.
\begin{lemma}\label{le:data_fin}
Consider system \eqref{eq:sys2}, let Assumption \ref{ass:obs} hold, and let $T\in\mathbb{N}$. Then, it follows that
\begin{equation}\label{eq:J2fin}
J_{T}(\mathbb{S}_{\gamma})=\mathrm{tr}(E_1^\mathrm{T} \bar{W}_o^\gamma(T) E_1),
\end{equation}
where $E_1=[I_m ~ 0_{m\times (Nr+(N-1)m)}]^\mathrm{T}$, and $\bar{W}_o^\gamma(t)$ is a symmetric matrix for all $t\in\{1,\ldots,T\}$, which satisfies the following data-driven dynamics
\begin{multline}\label{eq:data_LEt}
\Phi_1^\mathrm{T}(\tau)\mathrm{vech}(\bar{W}_o^\gamma(t+1)))\\=\Phi_2^\mathrm{T}(\tau)\mathrm{vech}(\bar{W}_o^\gamma(t)))+\norm{\tilde{y}(\tau)}^2,~\bar{W}_o^\gamma(0)=0,
\end{multline}
where $\Phi_1(\tau)=H(z(\tau)\otimes z(\tau))$ and $\Phi_2(\tau)=H((z(\tau+1)-E_1u(\tau))\otimes (z(\tau+1)-E_1u(\tau)))$, $\tau\ge N$.
\end{lemma}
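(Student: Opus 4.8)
The plan is to mirror the argument used for the infinite-horizon case in Lemma~\ref{le:data_inf}, replacing the static ALE with the time-varying DLE \eqref{eq:gram_fin} and carrying the recursion index $t$ through the manipulations. The starting point is that \eqref{eq:gram_fin} is a matrix identity relating $W_o^\gamma(t)$ and $W_o^\gamma(t+1)$; since it holds as a matrix equation, I may pre- and post-multiply it by $x^\mathrm{T}(\tau)$ and $x(\tau)$ for any measurement instant $\tau\ge N$ without affecting its validity. First I would form, for such a $\tau$, the scalar identity
\begin{equation*}
x^\mathrm{T}(\tau)A^\mathrm{T}W_o^\gamma(t)Ax(\tau)+\norm{\tilde{y}(\tau)}^2=x^\mathrm{T}(\tau)W_o^\gamma(t+1)x(\tau),
\end{equation*}
where I have used $x^\mathrm{T}(\tau)C^\mathrm{T}\mathbb{S}_\gamma^\mathrm{T}\mathbb{S}_\gamma Cx(\tau)=\norm{\tilde{y}(\tau)}^2$ from \eqref{eq:sys2}.

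Next I would eliminate the unknown matrices exactly as in the proof of Lemma~\ref{le:data_inf}. Substituting $Ax(\tau)=x(\tau+1)-Bu(\tau)$ from \eqref{eq:sys2}, then invoking Lemma~\ref{le:output} to write $x(\tau)=Mz(\tau)$ and $x(\tau+1)=Mz(\tau+1)$, together with the identity $B=ME_1$, collapses every occurrence of the unknown matrices into the single data matrix $M$. Defining $\bar{W}_o^\gamma(t)=M^\mathrm{T}W_o^\gamma(t)M$ then yields the purely data-based scalar identity
\begin{multline*}
(z(\tau+1)-E_1u(\tau))^\mathrm{T}\bar{W}_o^\gamma(t)(z(\tau+1)-E_1u(\tau))+\norm{\tilde{y}(\tau)}^2\\=z^\mathrm{T}(\tau)\bar{W}_o^\gamma(t+1)z(\tau).
\end{multline*}
Symmetry of $\bar{W}_o^\gamma(t)$ is inherited from that of $W_o^\gamma(t)$, and the boundary condition $\bar{W}_o^\gamma(0)=M^\mathrm{T}W_o^\gamma(0)M=0$ follows from $W_o^\gamma(0)=0$.

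To reach \eqref{eq:data_LEt}, I would then linearly parameterize the two quadratic forms using $w^\mathrm{T}Ww=(w\otimes w)^\mathrm{T}\mathrm{vec}(W)$, and, exploiting the symmetry of $\bar{W}_o^\gamma(t)$ and $\bar{W}_o^\gamma(t+1)$, pass to half-vectorization through the operator $H(\cdot)=\mathrm{vecs}(\mathrm{vec}^{-1}(\cdot))$, which satisfies $w^\mathrm{T}Ww=H(w\otimes w)^\mathrm{T}\mathrm{vech}(W)$ for symmetric $W$. This directly produces the recursion \eqref{eq:data_LEt} with $\Phi_1(\tau)=H(z(\tau)\otimes z(\tau))$ and $\Phi_2(\tau)=H((z(\tau+1)-E_1u(\tau))\otimes(z(\tau+1)-E_1u(\tau)))$. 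Finally, the trace identity \eqref{eq:J2fin} follows by combining Lemma~\ref{le:model_fin} with $B=ME_1$ and $\bar{W}_o^\gamma(T)=M^\mathrm{T}W_o^\gamma(T)M$, since then $\mathrm{tr}(E_1^\mathrm{T}\bar{W}_o^\gamma(T)E_1)=\mathrm{tr}(B^\mathrm{T}W_o^\gamma(T)B)=J_T(\mathbb{S}_\gamma)$.

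The step I expect to require the most care is the decoupling of the measurement timestamp $\tau$ from the recursion index $t$, which is exactly the feature emphasized before the lemma. Because \eqref{eq:gram_fin} is a matrix identity holding independently of any trajectory, the scalar relation obtained after multiplying by $x(\tau)$ remains valid for \emph{every} $\tau\ge N$ at which Lemma~\ref{le:output} guarantees $x(\tau)=Mz(\tau)$; the Gramian recursion in $t$ and the data instant $\tau$ never need to be synchronized. I would make this explicit to justify that \eqref{eq:data_LEt} may be propagated in $t$ from $\bar{W}_o^\gamma(0)=0$ using data harvested at arbitrary (and, in a subsequent stacking argument analogous to Corollary~\ref{cor:1}, reused) instants $\tau$. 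The remaining manipulations are routine adaptations of the infinite-horizon computation.
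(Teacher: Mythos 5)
Your proposal is correct and follows essentially the same route as the paper's own proof: pre/post-multiplying the DLE \eqref{eq:gram_fin} by $x(\tau)$, substituting $Ax(\tau)=x(\tau+1)-Bu(\tau)$ and $x(\tau)=Mz(\tau)$ with $B=ME_1$, defining $\bar{W}_o^\gamma(t)=M^\mathrm{T}W_o^\gamma(t)M$, and then half-vectorizing via $H(\cdot)$ before closing with Lemma~\ref{le:model_fin}. Your explicit remark on decoupling the data timestamp $\tau$ from the recursion index $t$ matches the observation the paper makes just before the lemma.
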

\begin{proof}
Let $\tau\ge N$. Multiplying \eqref{eq:gram_fin} of Lemma \ref{le:model_fin} from the left with $x^\textrm{T}(\tau)$ and from the right with $x(\tau)$, we get:
\begin{equation}\nonumber
x^\textrm{T}(\tau)W_o^\gamma(t+1)x(\tau)=x^\textrm{T}(\tau)A^\mathrm{T}W_o^\gamma(t)Ax(\tau)+\norm{\tilde{y}(\tau)}^2.
\end{equation}
Since $Ax(\tau)=x(\tau+1)-Bu(\tau)$, this property yields:
\begin{multline}\nonumber
x^\textrm{T}(\tau)W_o^\gamma(t+1)x(\tau)=(x(\tau+1)-Bu(\tau))^\textrm{T}\\\cdot W_o^\gamma(t)(x(\tau+1)-Bu(\tau))+\norm{\tilde{y}(\tau)}^2{.}
\end{multline}
Using Lemma \ref{le:output}, we have $x(\tau)=Mz(\tau)$, where $ME_1=B$. Therefore:
\begin{multline}\label{eq:Wot1}
z^\textrm{T}(\tau)M^\textrm{T}W_o^\gamma(t+1)Mz(\tau)\\=(z(\tau+1)-E_1u(\tau))^\textrm{T}M^\textrm{T}W_o^\gamma(t)\\\cdot M(z(\tau+1)-E_1u(\tau))+\norm{\tilde{y}(\tau)}^2.
\end{multline}
Defining $\bar{W}_o^\gamma(t)=M^\textrm{T}W_o^\gamma(t)M$, \eqref{eq:Wot1} becomes
\begin{multline}\label{eq:Wot2}
z^\textrm{T}(\tau)\bar{W}_o^\gamma(t+1)z(\tau)=(z(\tau+1)-E_1u(\tau))^\textrm{T}\bar{W}_o^\gamma(t)\\\cdot (z(\tau+1)-E_1u(\tau))+\norm{\tilde{y}(\tau)}^2.
\end{multline}
Note now that 
\begin{align}\nonumber
&z^\textrm{T}(\tau)\bar{W}_o^\gamma(t+1)z(\tau)\\&\nonumber\quad=H(z(\tau)\otimes z(\tau))^\textrm{T}\textrm{vech}(\bar{W}_o^\gamma(t+1))),\\\nonumber
&(z(\tau+1)-E_1u(\tau))^\textrm{T}\bar{W}_o^\gamma(t)(z(\tau+1)-E_1u(\tau))\\&\nonumber\quad=H((z(\tau+1)-E_1u(\tau))\otimes (z(\tau+1)-E_1u(\tau)))^\textrm{T}\\&\qquad\qquad\qquad\qquad\qquad\qquad\qquad\cdot\textrm{vech}(\bar{W}_o^\gamma(t))).\label{eq:Wot3}
\end{align}
Combining \eqref{eq:Wot2}-\eqref{eq:Wot3} yields \eqref{eq:data_LEt}.  Moreover, since $\bar{W}_o^\gamma(t)=M^\mathrm{T}{W}_o^\gamma(t) M$ for all $t\in\{0,\ldots,T\}$, and since $ME_1=B$, from Lemma \ref{le:model_fin} we get $\textrm{tr}(E_1^\mathrm{T}\bar{W}_o^\gamma(T) E_1)=\textrm{tr}(E_1^\mathrm{T}M^\mathrm{T}{W}_o^\gamma(T) M E_1){=}\textrm{tr}(B^\mathrm{T}{W}_o^\gamma(T) B){=}J_{T}(\mathbb{S}_\gamma)$. \hspace{-1mm}\frQED
\end{proof}

Similar to any data-driven scheme, the Lyapunov-like equation \eqref{eq:data_LEt} can be propagated to obtain $\bar{W}_o^\gamma(T)$ only if sufficient input-output data have been collected. We summarize this requirement in the following Corollary.
\begin{corollary}\label{cor:fullrow2}
Let $t_0,t_1,\ldots,t_k$, be measurement time instants with respect to $\tau\ge N$, and let Assumption \ref{ass:obs} hold. Denote
\begin{align*}
\Psi_1&:=\begin{bmatrix}\Phi_1(t_0) & \Phi_1(t_1) & \ldots & \Phi_1(t_k) \end{bmatrix},\\
\Psi_2&:=\begin{bmatrix}\Phi_2(t_0) & \Phi_2(t_1) & \ldots & \Phi_2(t_k) \end{bmatrix},\\
Y_{\gamma}&=\begin{bmatrix}\norm{\tilde{y}(t_0)}^2 & \norm{\tilde{y}(t_1)}^2 & \ldots & \norm{\tilde{y}(t_k)}^2 \end{bmatrix}^\mathrm{T}.
\end{align*}
If $\Psi_1$ has full row rank, then the Lyapunov-like equation \eqref{eq:data_LEt} can be exactly propagated at each step $t\in\{0,\ldots,T-1\}$ according to the least-squares law:
\begin{align}\nonumber
\mathrm{vech}(\bar{W}_o^\gamma(t+1) )&=(\Psi_1^\mathrm{T})^\dagger (\Psi_2^\mathrm{T}\mathrm{vech}(\bar{W}_o^\gamma(t) )+Y_{\gamma}),\\
\mathrm{vech}(\bar{W}_o^\gamma(0) )&=0.\label{eq:Wot_sol}
\end{align}
\end{corollary}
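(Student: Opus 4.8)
The plan is to reduce the time-varying propagation law to, at each fixed index $t$, a single \emph{static} linear system in the unknown $\mathrm{vech}(\bar{W}_o^\gamma(t+1))$ that is solved in closed form by the stated pseudoinverse formula, and then to close the recursion by induction on $t$. The first step is to exploit the fact, from Lemma \ref{le:data_fin}, that the scalar identity \eqref{eq:data_LEt} holds not merely for one $\tau$ but for \emph{every} $\tau\ge N$, and in particular at each measurement instant $\tau=t_0,t_1,\ldots,t_k$. Writing \eqref{eq:data_LEt} once for each such $\tau$ and stacking the $k+1$ scalar equations vertically, the left-hand sides assemble into $\Psi_1^\mathrm{T}\,\mathrm{vech}(\bar{W}_o^\gamma(t+1))$, the first right-hand term into $\Psi_2^\mathrm{T}\,\mathrm{vech}(\bar{W}_o^\gamma(t))$, and the output terms $\norm{\tilde{y}(t_j)}^2$ into $Y_\gamma$ — precisely because the columns of $\Psi_1,\Psi_2$ are the per-instant regressors $\Phi_1(t_j),\Phi_2(t_j)$ and the entries of $Y_\gamma$ are the corresponding $\norm{\tilde{y}(t_j)}^2$. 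This yields the stacked identity $\Psi_1^\mathrm{T}\,\mathrm{vech}(\bar{W}_o^\gamma(t+1))=\Psi_2^\mathrm{T}\,\mathrm{vech}(\bar{W}_o^\gamma(t))+Y_\gamma$.

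Next I would invoke the rank hypothesis. Since $\Psi_1$ has full row rank, its transpose $\Psi_1^\mathrm{T}$ has full column rank, whence $(\Psi_1^\mathrm{T})^\dagger\Psi_1^\mathrm{T}=I$ and every \emph{consistent} system $\Psi_1^\mathrm{T}v=w$ has the unique solution $v=(\Psi_1^\mathrm{T})^\dagger w$. Consistency of the stacked identity is not an extra assumption: by Lemma \ref{le:data_fin} the true proxy $\mathrm{vech}(\bar{W}_o^\gamma(t+1))$ is itself a solution, so the right-hand side lies in the range of $\Psi_1^\mathrm{T}$. Applying $(\Psi_1^\mathrm{T})^\dagger$ to the known right-hand side therefore returns exactly that true value, which is the update law \eqref{eq:Wot_sol}.

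Finally I would close the argument by induction on $t$. The base case is the shared initialization $\mathrm{vech}(\bar{W}_o^\gamma(0))=0$, matching Lemma \ref{le:data_fin}. For the inductive step, assuming the $\bar{W}_o^\gamma(t)$ produced by \eqref{eq:Wot_sol} coincides with the true matrix of Lemma \ref{le:data_fin}, the vector $\Psi_2^\mathrm{T}\mathrm{vech}(\bar{W}_o^\gamma(t))+Y_\gamma$ is the exact known right-hand side, so the previous paragraph delivers the true $\bar{W}_o^\gamma(t+1)$. I do not expect a genuine obstacle here; the one point requiring care — and the analogue of the subtlety that forces the more delicate argument of Theorem \ref{th:2} — is that the pseudoinverse by itself returns only a minimum-norm least-squares solution, and it reproduces the exact Gramian proxy solely because full row rank of $\Psi_1$ supplies \emph{uniqueness} and Lemma \ref{le:data_fin} supplies \emph{consistency}; both must be stated explicitly. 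It is also worth flagging, as in the discussion preceding Theorem \ref{th:2}, that full row rank of $\Psi_1$ is attainable only when $Nr\le n$, so a rank-relaxed version of this Corollary would require a separate $\Lambda$-based argument in the spirit of Theorem \ref{th:2}.
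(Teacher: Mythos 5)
Your proposal is correct and follows essentially the same route as the paper's proof: stack the scalar identity \eqref{eq:data_LEt} over the measurement instants to obtain $\Psi_1^\mathrm{T}\mathrm{vech}(\bar{W}_o^\gamma(t+1))=\Psi_2^\mathrm{T}\mathrm{vech}(\bar{W}_o^\gamma(t))+Y_\gamma$, then invoke full row rank of $\Psi_1$ to invert via the pseudoinverse. Your explicit remarks on consistency, uniqueness, and the inductive closure of the recursion are details the paper leaves implicit, but they do not constitute a different argument.
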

\begin{proof}
Stacking the data-driven dynamics \eqref{eq:data_LEt} in a vertical matrix for all $\tau\in\{t_0,t_1,\ldots,t_k\}$, we obtain
\begin{equation}\label{eq:tmp}
\Psi_1^\mathrm{T}\mathrm{vech}(\bar{W}_o^\gamma(t+1) )= \Psi_2^\mathrm{T}\mathrm{vech}(\bar{W}_o^\gamma(t) )+Y_{\gamma}.
\end{equation}
Since $\Psi_1$ has full row rank, equation \eqref{eq:Wot_sol} follows. \frQED
\end{proof}
As a final remark, we note that when the observability matrix $V_N$ (defined in Lemma \ref{le:output}) is redundant, in the sense that it has rank $n$ but more than $n$ rows, the matrix $\bar{W}_o^\gamma(T)$ cannot be exactly extracted. However, in what follows, we prove that obtaining any solution to \eqref{eq:tmp} is sufficient to estimate the metric \eqref{eq:J2fin}.
\begin{theorem}\label{th:4}
Let Assumption \ref{ass:obs} hold, and let $\hat{W}_o^\gamma(t)$ be obtained, for all $t\in\{0,\ldots,T-1\}$ from
\begin{align}\nonumber
\mathrm{vech}(\hat{W}_o^\gamma(t+1) )&=(\Psi_1^\mathrm{T})^\dagger (\Psi_2^\mathrm{T}\mathrm{vech}(\hat{W}_o^\gamma(t) )+Y_{\gamma}),\\
\mathrm{vech}(\hat{W}_o^\gamma(0) )&=0.\label{eq:Wohatt_sol}
\end{align}
If  $\mathrm{rank}(\Psi_1)=(Nm+n)(Nm+n+1)/2$, then
\begin{equation}\label{eq:costWhat_t}
J_{T}(\mathbb{S}_{\gamma})=\mathrm{tr}\left(E_1^\mathrm{T}\hat{W}_o^\gamma(T) E_1 \right).
\end{equation}
\begin{proof}
We will use induction for this proof.

1) At $t=0$, we have $\bar{W}_o^\gamma(0)=M^\textrm{T}{W}_o^\gamma(0)M=0$, and $\hat{W}_o^\gamma(0)=0$ by \eqref{eq:Wohatt_sol}. Therefore, $\bar{W}_o^\gamma(0)=\hat{W}_o^\gamma(0)$ and, accordingly, $\Lambda^\textrm{T}\bar{W}_o^\gamma(0)\Lambda=\Lambda^\textrm{T}\hat{W}_o^\gamma(0)\Lambda$, with $\Lambda$ as in \eqref{eq:zL}.

2) Assume that at time $t$, we have $\Lambda^\textrm{T}\bar{W}_o^\gamma(t)\Lambda=\Lambda^\textrm{T}\hat{W}_o^\gamma(t)\Lambda$. We want to prove that $\Lambda^\textrm{T}\bar{W}_o^\gamma(t+1)\Lambda=\Lambda^\textrm{T}\hat{W}_o^\gamma(t+1)\Lambda$ under the considered assumptions. To that end, note that by defining $\tilde{W}_o^{\gamma}(t)=\bar{W}_o^{\gamma}(t)-\hat{W}_o^{\gamma}(t)$, if we subtract \eqref{eq:Wohatt_sol} from \eqref{eq:tmp} we get
\begin{equation}\nonumber
\Psi_1^\mathrm{T}\mathrm{vech}(\tilde{W}_o^\gamma(t+1) )=\Psi_2^\mathrm{T}\mathrm{vech}(\tilde{W}_o^\gamma(t)).
\end{equation}
Devectorizing this equation yields, for $\tau\in\{t_0,\ldots,\allowbreak t_k\}$:
\begin{multline}\label{eq:tmp2}
z^\textrm{T}(\tau)\tilde{W}_o^\gamma(t+1)z(\tau)=(z(\tau+1)-E_1u(\tau))^\textrm{T}\\\cdot \tilde{W}_o^\gamma(t)(z(\tau+1)-E_1u(\tau)).
\end{multline}
Following the line of proof of Theorem \ref{th:2} it follows that $z(\tau)=\Lambda \bar{z}(\tau)$, and $E_1=\Lambda\bar{E}_1$, where $\bar{z}(\tau),~\Lambda,~\bar{E}_1$ are as in \eqref{eq:zL}, \eqref{eq:barE1}. Hence, \eqref{eq:tmp2} turns into
\begin{multline}\label{eq:tmp3}
\bar{z}^\textrm{T}(\tau)\Lambda^\textrm{T}\tilde{W}_o^\gamma(t+1)\Lambda\bar{z}(\tau)=(\bar{z}(\tau+1)-\bar{E}_1u(\tau))^\textrm{T}\\\cdot\Lambda^\textrm{T}\tilde{W}_o^\gamma(t)\Lambda (\bar{z}(\tau+1)-\bar{E}_1u(\tau)).
\end{multline}
From the inductive assumption we have $\Lambda^\textrm{T}\bar{W}_o^\gamma(t)\Lambda=\Lambda^\textrm{T}\hat{W}_o^\gamma(t)\Lambda$, and therefore $\Lambda^\textrm{T}\tilde{W}_o^\gamma(t)\Lambda=0$. Hence the right-hand side of \eqref{eq:tmp3} vanishes, and \eqref{eq:tmp3} turns into:
\begin{equation*}
\bar{z}^\textrm{T}(\tau)\Lambda^\textrm{T}\tilde{W}_o^\gamma(t+1)\Lambda\bar{z}(\tau)=0.
\end{equation*}
Vectorizing this equation, we get 
\begin{equation*}
\tilde{\Phi}_1^\mathrm{T}(\tau)\mathrm{vech}(\Lambda^\textrm{T}\bar{W}_o^\gamma(t+1))\Lambda)=0
\end{equation*}
for all $\tau\in\{t_0,\ldots,t_k\}$, where $\tilde{\Phi}_1(t)=H\Big(\bar{z}(t)\otimes \bar{z}(t)\Big)$. Accordingly, if we define $\tilde{\Psi}_1:=[\tilde{\Phi}_1(t_0) ~ \tilde{\Phi}_1(t_1) ~ \ldots ~ \tilde{\Phi}_1(t_k)]$, 
we obtain $\tilde{\Psi}_1^\textrm{T}\textrm{vech}(\Lambda^\textrm{T} \tilde{W}_o^\gamma(t+1) \Lambda)=0$. 
Given the assumption that $\mathrm{rank}(\Psi_1)=(Nm+n)(Nm+n+1)/2$, and following identically the proof of Theorem \ref{th:2} below equation \eqref{eq:barPsi}, it then follows that $\Lambda^\textrm{T} \tilde{W}_o^\gamma(t+1) \Lambda=0$, which concludes the induction. Hence, we have $\Lambda^\textrm{T} \tilde{W}_o^\gamma(T) \Lambda=0$, and also $\bar{E}_1^\textrm{T}\Lambda^\textrm{T} \tilde{W}_o^\gamma(T) \Lambda\bar{E}_1=0$. Since $E_1=\Lambda \bar{E}_1$, this implies that $E_1^\textrm{T}\tilde{W}_o^\gamma(T) {E}_1=0$, and thus  $E_1^\textrm{T}\hat{W}_o^\gamma(T) {E}_1=E_1^\textrm{T}\bar{W}_o^\gamma(T) {E}_1$. 
The final result \eqref{eq:costWhat_t} then follows from Lemma \ref{le:data_fin}. \frQED
\end{proof} 
\end{theorem}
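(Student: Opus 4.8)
The plan is to mirror the rank-based uniqueness argument of Theorem \ref{th:2}, but to carry it out inductively in $t$, since the finite-horizon recursion \eqref{eq:Wohatt_sol} propagates the estimate forward in time rather than determining it in a single least-squares solve. First I would introduce the error $\tilde{W}_o^\gamma(t) = \bar{W}_o^\gamma(t) - \hat{W}_o^\gamma(t)$, where $\bar{W}_o^\gamma(t) = M^\mathrm{T} W_o^\gamma(t) M$ is the exact matrix satisfying the stacked recursion \eqref{eq:tmp} and $\hat{W}_o^\gamma(t)$ is the least-squares iterate \eqref{eq:Wohatt_sol}. The ultimate goal is $E_1^\mathrm{T} \tilde{W}_o^\gamma(T) E_1 = 0$, since by Lemma \ref{le:data_fin} this immediately yields \eqref{eq:costWhat_t}. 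Rather than attack this directly, I would establish the stronger invariant $\Lambda^\mathrm{T} \tilde{W}_o^\gamma(t) \Lambda = 0$ for every $t \in \{0,\ldots,T\}$, with $\Lambda$ as in \eqref{eq:zL}, and then recover the claim at $t=T$ from $E_1 = \Lambda \bar{E}_1$.

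For the base case, both $\bar{W}_o^\gamma(0)$ and $\hat{W}_o^\gamma(0)$ vanish (the former because $W_o^\gamma(0)=0$ in the DLE, the latter by definition), so the invariant holds trivially. For the inductive step I would subtract \eqref{eq:Wohatt_sol} from \eqref{eq:tmp}, devectorize the resulting identity to a quadratic-form relation valid at each data instant $\tau$, and substitute the factorizations $z(\tau) = \Lambda \bar{z}(\tau)$ and $E_1 = \Lambda \bar{E}_1$ derived as in the proof of Theorem \ref{th:2}. The decisive simplification is that the forcing term on the right-hand side is built entirely from $\Lambda^\mathrm{T} \tilde{W}_o^\gamma(t) \Lambda$, which is zero by the inductive hypothesis; this collapses the relation to a homogeneous equation in $\Lambda^\mathrm{T} \tilde{W}_o^\gamma(t+1) \Lambda$ alone. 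Re-vectorizing and stacking over all $\tau$ produces $\tilde{\Psi}_1^\mathrm{T} \mathrm{vech}(\Lambda^\mathrm{T} \tilde{W}_o^\gamma(t+1) \Lambda) = 0$, at which point the rank machinery of Theorem \ref{th:2} applies verbatim.

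To close the step I would reuse the rank bookkeeping already developed: observability of $(A,\hat{C})$ forces $V_N$, hence $\Lambda$ and $\Lambda \otimes \Lambda$, to have full column rank; combined with the full-column-rank half-vectorization operator and the hypothesis $\mathrm{rank}(\Psi_1) = (Nm+n)(Nm+n+1)/2$, this yields $\mathrm{rank}(\tilde{\Psi}_1) = (Nm+n)(Nm+n+1)/2$, i.e.\ $\tilde{\Psi}_1$ has full row rank. The homogeneous system then admits only $\Lambda^\mathrm{T} \tilde{W}_o^\gamma(t+1) \Lambda = 0$, completing the induction. Finally, pre- and post-multiplying by $\bar{E}_1$ and using $E_1 = \Lambda \bar{E}_1$ gives $E_1^\mathrm{T} \tilde{W}_o^\gamma(T) E_1 = 0$, whence $E_1^\mathrm{T} \hat{W}_o^\gamma(T) E_1 = E_1^\mathrm{T} \bar{W}_o^\gamma(T) E_1 = J_T(\mathbb{S}_\gamma)$.

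I expect the main obstacle to be the inductive step rather than the rank closure, which is inherited wholesale from Theorem \ref{th:2}. The subtlety is that when $V_N$ is redundant one cannot uniquely reconstruct $\bar{W}_o^\gamma(t)$, so the forcing term $\Psi_2^\mathrm{T}\mathrm{vech}(\hat{W}_o^\gamma(t))$ in \eqref{eq:Wohatt_sol} need not match its exact counterpart as a full matrix. The argument goes through only because the discrepancy enters exclusively through the compressed quantity $\Lambda^\mathrm{T} \tilde{W}_o^\gamma(t) \Lambda$, which the induction keeps at zero; verifying that this compressed error is precisely what the recursion transports from step $t$ to step $t+1$ is the \emph{crux} of the proof, and is what makes the one-shot uniqueness of Theorem \ref{th:2} survive the time-varying propagation.
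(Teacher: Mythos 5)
Your proposal matches the paper's proof essentially step for step: the same inductive invariant $\Lambda^\mathrm{T}\tilde{W}_o^\gamma(t)\Lambda=0$, the same subtraction/devectorization/substitution of $z(\tau)=\Lambda\bar{z}(\tau)$ and $E_1=\Lambda\bar{E}_1$, the same reuse of the rank argument from Theorem \ref{th:2} to force the homogeneous system to have only the zero solution, and the same final step via $\bar{E}_1$. The approach and all key ideas are correct and identical to the paper's.
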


\section{Data-Driven Sensor Selection}\label{sec:data_select}

\subsection{Data-Driven Greedy Algorithm}

The previous section evaluated, in a data-driven manner, the $\mathcal{H}_2$ norm that a given sensor set yields over an infinite or a finite horizon. The next step is to use this approach to find the best set of sensors without knowledge of the system. Towards this end, the following lemma provides an unsurprising result, given the modularity of the $\mathcal{H}_2$ norm proved in \cite{summers2014optimal}. That is, to evaluate $J_T(\mathbb{S}_{\gamma})$ (or $J_{\infty}(\mathbb{S}_{\gamma})$), the lemma shows that one only needs to evaluate $J_T(\mathbb{S}_{\gamma_i})$ (or $J_{\infty}(\mathbb{S}_{\gamma_i})$) for all $i=1,\ldots,p'$.

\begin{lemma}\label{le:selection}
It holds that $J_T(\mathbb{S}_{\gamma})=\sum_{i=1}^{p'}J_T(\mathbb{S}_{\gamma_i}),~T\in\mathbb{N}$, and
$J_{\infty}(\mathbb{S}_{\gamma})=\sum_{i=1}^{p'}J_{\infty}(\mathbb{S}_{\gamma_i}).$
\end{lemma}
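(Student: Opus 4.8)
The plan is to reduce both identities to the elementary fact that the trace of a principal submatrix equals the sum of the selected diagonal entries. I would begin with the infinite-horizon case and set $P := C W_c C^{\mathrm{T}} \in \mathbb{R}^{p \times p}$, so that by definition $J_{\infty}(\mathbb{S}_{\gamma}) = \mathrm{tr}(\mathbb{S}_{\gamma} P \mathbb{S}_{\gamma}^{\mathrm{T}})$. Since $\mathbb{S}_{\gamma} = [e_{\gamma_1} ~ \ldots ~ e_{\gamma_{p'}}]^{\mathrm{T}}$, the product $\mathbb{S}_{\gamma} P \mathbb{S}_{\gamma}^{\mathrm{T}}$ is precisely the $p' \times p'$ principal submatrix of $P$ indexed by $\gamma$, whose $(k,l)$ entry is $e_{\gamma_k}^{\mathrm{T}} P e_{\gamma_l}$.

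Next I would expand the trace as the sum of its diagonal entries:
\begin{equation*}
J_{\infty}(\mathbb{S}_{\gamma}) = \mathrm{tr}(\mathbb{S}_{\gamma} P \mathbb{S}_{\gamma}^{\mathrm{T}}) = \sum_{i=1}^{p'} e_{\gamma_i}^{\mathrm{T}} P e_{\gamma_i}.
\end{equation*}
Each summand is a scalar and hence equals its own trace; moreover, the singleton selection matrix is $\mathbb{S}_{\gamma_i} = e_{\gamma_i}^{\mathrm{T}}$, so I can identify $e_{\gamma_i}^{\mathrm{T}} P e_{\gamma_i} = \mathrm{tr}(\mathbb{S}_{\gamma_i} P \mathbb{S}_{\gamma_i}^{\mathrm{T}}) = J_{\infty}(\mathbb{S}_{\gamma_i})$. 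Substituting this back yields $J_{\infty}(\mathbb{S}_{\gamma}) = \sum_{i=1}^{p'} J_{\infty}(\mathbb{S}_{\gamma_i})$, as claimed.

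The finite-horizon identity then follows by the identical argument after replacing $P$ with $P(T) := C W_c(T) C^{\mathrm{T}}$, since the decomposition uses no property of the Gramian beyond the fact that it produces a fixed $p \times p$ matrix sandwiched by the selection operators. I do not anticipate a genuine obstacle here: the only point requiring care is the bookkeeping of the selection-matrix convention, namely that $\mathbb{S}_{\gamma}$ stacks the rows $e_{\gamma_i}^{\mathrm{T}}$, so that left- and right-multiplication extracts exactly the principal submatrix on $\gamma$ and the trace isolates its diagonal. This additive (modular) structure is precisely what will later justify evaluating the single-sensor costs $J_{\infty}(\mathbb{S}_{\gamma_i})$, $J_{T}(\mathbb{S}_{\gamma_i})$ independently and assembling them greedily.
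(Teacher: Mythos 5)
Your proof is correct and follows essentially the same route as the paper's: the paper likewise expands $\mathrm{tr}(\mathbb{S}_{\gamma}CW_c(T)C^{\mathrm{T}}\mathbb{S}_{\gamma}^{\mathrm{T}})$ as $\sum_{i=1}^{p'}\mathrm{tr}(c_{\gamma_i}W_c(T)c_{\gamma_i}^{\mathrm{T}})$, which is exactly your diagonal-of-a-principal-submatrix observation phrased via the rows $c_{\gamma_i}=e_{\gamma_i}^{\mathrm{T}}C$ instead of via $P=CW_cC^{\mathrm{T}}$. No gaps; the finite- and infinite-horizon cases are handled identically in both arguments.
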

\begin{proof}
For any $T\in\mathbb{N}$, we have the relation that $J_T(\mathbb{S}_{\gamma}){=}\textrm{tr}(\mathbb{S}_{\gamma}CW_c(T)C^\mathrm{T}\mathbb{S}_{\gamma}^\mathrm{T}){=}\sum_{i=1}^{p'}\textrm{tr}(c_{\gamma_i}W_c(T)c_{\gamma_i}^\textrm{T})=\sum_{i=1}^{p'}J_T(\mathbb{S}_{\gamma_i}).$
The proof for $J_{\infty}(\mathbb{S}_{\gamma})$  is identical. \frQED
\end{proof}

The implication of Lemma \ref{le:selection} is that, to solve the sensor selection problem, one does not need to perform data-driven estimation of $J_T(\mathbb{S}_{\gamma})$ (or $J_{\infty}(\mathbb{S}_{\gamma})$) for all possible selection matrices $\mathbb{S}_{\gamma}$; rather, one only needs to evaluate the cost $J_T(\mathbb{S}_{j})$, $j=1,\ldots,p$, of each distinct sensor row in $C$. Subsequently, sorting the scores $J_T(\mathbb{S}_{j})$ for all $j=1,\ldots,p$, and choosing $\mathbb{S}_{\gamma^\star}$ to contain the $p'$ sensors with the highest scores indeed solves the sensor selection problem \eqref{eq:sensor_problem_trfin} (or \eqref{eq:sensor_problem_trinf}). This procedure is summarized in Algorithm \ref{al:placement}, and its convergence properties stated formally in the following Theorem.

\begin{algorithm}[!t]
\caption{Data-Driven Sensor Selection}
\hspace*{\algorithmicindent} \textbf{Input}: Horizon $T\in\mathbb{N}\cup\{\infty\}$.
\begin{algorithmic}[1]
\Procedure{}{}
\For {$j=1,\ldots,p$}
\State $\tilde{C}\leftarrow \mathbb{S}_{j}C$.
\State Gather input-output data from \eqref{eq:sys2}.
\State Evaluate $J_T(\mathbb{S}_j)$ from \eqref{eq:costWhat} or \eqref{eq:costWhat_t}.
\EndFor
\State Sort $J_T(\mathbb{S}_j)$ for $j=1,\ldots,p$ in decreasing order, \hspace*{4mm} let $\gamma^\star=\{\gamma_1^\star,\ldots,\gamma_{p'}^\star\}$ contain the $p'$ indices with \hspace*{4mm} the highest scores.
\State Select $C'\leftarrow \mathbb{S}_{\gamma^\star}C$.
\EndProcedure
\end{algorithmic}\label{al:placement}
\end{algorithm}

\begin{theorem}
\begin{enumerate}
\item Suppose that $T=\infty$. Then, Algorithm \ref{al:placement} converges to the optimal sensor selection matrix \eqref{eq:sensor_problem_trinf} under the Assumptions of Theorem \ref{th:2}.
\item Suppose that $T\in\mathbb{N}$. Then, Algorithm \ref{al:placement} converges to the optimal sensor selection matrix \eqref{eq:sensor_problem_trfin} under the Assumptions of Theorem \ref{th:4}.
\end{enumerate}
\begin{proof}
Follows from Theorems \ref{th:2}, \ref{th:4} and Lemma \ref{le:selection}. \frQED
\end{proof}
\end{theorem}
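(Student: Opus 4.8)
The plan is to combine the exactness of the data-driven cost evaluation established in Theorems~\ref{th:2} and~\ref{th:4} with the modularity of the selection objective proved in Lemma~\ref{le:selection}, and then to invoke the elementary fact that a modular set function subject to a fixed cardinality constraint is maximized by ranking its elements. First I would argue correctness of the per-sensor evaluations performed inside the loop of Algorithm~\ref{al:placement}. At each iteration $j=1,\ldots,p$ the candidate sensing matrix is $\tilde{C}=\mathbb{S}_j C$, and the algorithm computes the score via formula~\eqref{eq:costWhat} when $T=\infty$ or~\eqref{eq:costWhat_t} when $T\in\mathbb{N}$. Under the Assumptions of Theorem~\ref{th:2} (resp.\ Theorem~\ref{th:4}) these formulas return the exact values $J_\infty(\mathbb{S}_j)$ (resp.\ $J_T(\mathbb{S}_j)$) without any model knowledge. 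I would remark that the regressor $\Psi$ (resp.\ $\Psi_1$) is built from $z(\cdot)$, which depends only on the inputs $u$ and the fixed observability outputs $\hat{y}$, and is therefore independent of which candidate sensor is being tested; only the target vector $Y_\gamma$, carrying $\tilde{y}=\mathbb{S}_j C x$, changes across iterations. Consequently the single rank hypothesis $\mathrm{rank}(\Psi)=(Nm+n)(Nm+n+1)/2$ (resp.\ on $\Psi_1$) suffices uniformly across all $p$ rounds.

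Next I would invoke Lemma~\ref{le:selection} to express the objective of an arbitrary candidate index set $\gamma=\{\gamma_1,\ldots,\gamma_{p'}\}$ additively, as $J_T(\mathbb{S}_\gamma)=\sum_{i=1}^{p'}J_T(\mathbb{S}_{\gamma_i})$ (and identically for $J_\infty$). Thus the set objective in~\eqref{eq:sensor_problem_trfin} (resp.\ \eqref{eq:sensor_problem_trinf}) is a \emph{modular} function of the chosen index set, whose per-element weights are precisely the individual scores computed exactly in the previous step.

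Finally I would close with the observation that maximizing a modular function $\gamma\mapsto\sum_{i\in\gamma}w_i$ over all index sets of fixed cardinality $|\gamma|=p'$ is solved exactly by selecting the $p'$ elements with the largest weights $w_i$, which is exactly the sort-and-select step of Algorithm~\ref{al:placement}. Since the weights coincide with the true scores by the first step, the returned index set $\gamma^\star$ attains the maximum in~\eqref{eq:sensor_problem_trfin} (resp.\ \eqref{eq:sensor_problem_trinf}), establishing convergence to the optimal selection matrix in both the finite- and infinite-horizon cases.

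The main obstacle here is conceptual rather than computational: it is recognizing that the sorting procedure is \emph{globally} optimal, not merely near-optimal, precisely because Lemma~\ref{le:selection} delivers exact modularity; had the $\mathcal{H}_2$ metric only been submodular, sorting would yield an approximation guarantee rather than the exact optimum. The sole technical point requiring care is confirming that the rank hypotheses of Theorems~\ref{th:2}/\ref{th:4} persist across all $p$ data-gathering rounds, which, as noted above, reduces to a single richness condition on the inputs and the observability outputs $\hat{y}$.
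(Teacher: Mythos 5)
Your proposal is correct and follows essentially the same route as the paper: exact per-sensor evaluation via Theorems \ref{th:2}/\ref{th:4}, modularity of the objective via Lemma \ref{le:selection}, and the observation that sorting exactly maximizes a modular function under a cardinality constraint (the paper spells this reasoning out in the paragraph preceding Algorithm \ref{al:placement} and then cites the three results in its one-line proof). Your added remark that the regressor $\Psi$ depends only on $u$ and $\hat{y}$, so a single rank condition can cover all $p$ rounds, is a nice clarification consistent with the paper's remark on reusing the same data across sensors.
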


\begin{remark}
While Algorithm \ref{al:placement} sequentially evaluates the sensor selection metrics using different input-output data, this evaluation can also take place using the same data for all $j$. However, this requires having the resources to employ all to-be-evaluated sensors at once. \frqed
\end{remark}

\subsection{Computational Complexity}

The most expensive operation in Algorithm \ref{al:placement} is the computation of the pseudoinverse $\Psi^\dagger$ in the infinite-horizon case, and $\Psi_1^\dagger$ in the finite-horizon one, whose complexity is $\mathcal{O}(k^3)$. Given that these pseudoinverses have been calculated, the greedy algorithm in the infinite-horizon case then needs to perform $p$ operations of the form \eqref{eq:Wo_sol_pinv}, with complexity $\mathcal{O}(pk^2)$. On the other hand, the greedy algorithm in the finite-horizon case needs to perform $p$ operations of the form \eqref{eq:Wohatt_sol} up to $T$ times, with complexity  $\mathcal{O}(pTk^2)$. Hence, the overall computational complexity in the infinite-horizon case is $\mathcal{O}(k^3+pk^2)$, while in the finite-horizon case it is $\mathcal{O}(k^3+pTk^2)$.

\subsection{Optimization of Volumetric Observability Measures}\label{sec:log}

Throughout this work,  the focus has been on choosing sensors that optimize the $\mathcal{H}_2$ norm. However, one may also consider volumetric observability measures captured by the following log determinants \cite{ manohar2021optimal}:
\begin{equation}\label{eq:detinf}
\begin{split}
J'_{\infty}(\mathbb{S}_{\gamma})&=\textrm{logdet}(B^\mathrm{T}W_o^{\gamma} B),\\
J'_T(\mathbb{S}_{\gamma})&=\textrm{logdet}(B^\mathrm{T}W_o^{\gamma}(T)B),
\end{split}
\end{equation}
where $W_o^{\gamma}$ and $W_o^{\gamma}(T)$ are the infinite and finite horizon observability Gramians from \eqref{eq:disc_gram} and \eqref{eq:gram_fin}. Notably, although the $\mathcal{H}_2$ norm that was studied in the previous sections involved a trace instead of a log determinant, that trace's arguments were the same as the arguments in \eqref{eq:detinf}. As a result, we immediately get the following result for the data-driven estimation of  \eqref{eq:detinf}. 
\begin{corollary}\label{cor:data_det}
\begin{enumerate}
    \item Let the Assumptions of Theorem \ref{th:2} hold. Then $J'_{\infty}(\mathbb{S}_{\gamma})=\mathrm{logdet}(E_1^\mathrm{T}\hat{W}_o^\gamma E_1 ),$
    where $\hat{W}_o^\gamma$ is obtained from \eqref{eq:Wo_sol_pinv}.
    \item Let the Assumptions of Theorem \ref{th:4} hold. Then $J'_{T}(\mathbb{S}_{\gamma})=\mathrm{logdet}(E_1^\mathrm{T}\hat{W}_o^\gamma(T) E_1 ),$
    where $\hat{W}_o^\gamma(T)$ is obtained by propagating \eqref{eq:Wohatt_sol}.
\end{enumerate}    
\end{corollary}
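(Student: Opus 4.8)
The plan is to observe that the proofs of Theorems \ref{th:2} and \ref{th:4} establish not merely the trace identities \eqref{eq:costWhat} and \eqref{eq:costWhat_t}, but the stronger \emph{matrix} identities $E_1^\mathrm{T}\hat{W}_o^\gamma E_1 = B^\mathrm{T}W_o^\gamma B$ and $E_1^\mathrm{T}\hat{W}_o^\gamma(T) E_1 = B^\mathrm{T}W_o^\gamma(T) B$. Since the log determinant depends only on the matrix itself, applying $\mathrm{logdet}(\cdot)$ to both sides of these matrix identities immediately yields the expressions claimed for $J'_{\infty}(\mathbb{S}_\gamma)$ and $J'_T(\mathbb{S}_\gamma)$ in \eqref{eq:detinf}.

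For the infinite-horizon case, I would first recall that the proof of Theorem \ref{th:2} shows $E_1^\mathrm{T}\tilde{W}_o^\gamma E_1 = 0$ as a matrix (obtained via $\Lambda^\mathrm{T}\tilde{W}_o^\gamma\Lambda = 0$ together with $E_1 = \Lambda\bar{E}_1$), where $\tilde{W}_o^\gamma = \bar{W}_o^\gamma - \hat{W}_o^\gamma$. This gives the matrix equality $E_1^\mathrm{T}\hat{W}_o^\gamma E_1 = E_1^\mathrm{T}\bar{W}_o^\gamma E_1$. Since $\bar{W}_o^\gamma = M^\mathrm{T}W_o^\gamma M$ and $ME_1 = B$ (both from the proof of Lemma \ref{le:data_inf}), it follows that $E_1^\mathrm{T}\bar{W}_o^\gamma E_1 = B^\mathrm{T}W_o^\gamma B$, hence $E_1^\mathrm{T}\hat{W}_o^\gamma E_1 = B^\mathrm{T}W_o^\gamma B$. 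Taking log determinants of both sides and invoking the definition of $J'_{\infty}(\mathbb{S}_\gamma)$ in \eqref{eq:detinf} completes Part 1.

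For the finite-horizon case, the argument is identical: the induction in the proof of Theorem \ref{th:4} yields $\Lambda^\mathrm{T}\tilde{W}_o^\gamma(T)\Lambda = 0$, hence the matrix equality $E_1^\mathrm{T}\hat{W}_o^\gamma(T) E_1 = E_1^\mathrm{T}\bar{W}_o^\gamma(T) E_1$, and since $\bar{W}_o^\gamma(T) = M^\mathrm{T}W_o^\gamma(T) M$ with $ME_1 = B$ (from the proof of Lemma \ref{le:data_fin}), this equals $B^\mathrm{T}W_o^\gamma(T) B$. Applying $\mathrm{logdet}(\cdot)$ and using the definition of $J'_T(\mathbb{S}_\gamma)$ gives Part 2.

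The only point requiring care — the main ``obstacle'' — is that the conclusions of Theorems \ref{th:2} and \ref{th:4} are \emph{stated} as trace identities, and trace is linear and additive in a way that $\mathrm{logdet}$ is not; so one cannot simply substitute $\mathrm{logdet}$ for $\mathrm{tr}$ in those statements. What makes the corollary go through is that the underlying proofs actually produce an equality \emph{between the matrices} $E_1^\mathrm{T}\hat{W}_o^\gamma E_1$ and $B^\mathrm{T}W_o^\gamma B$ (through the vanishing of $E_1^\mathrm{T}\tilde{W}_o^\gamma E_1$), and any function of a matrix, in particular $\mathrm{logdet}$, is preserved under matrix equality. I would therefore make explicit in the proof that the equalities established earlier are matrix equalities and not merely equalities of their traces; once that is noted, both parts follow in a single line each.
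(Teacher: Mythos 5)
Your proposal is correct and follows essentially the same route as the paper, whose proof is simply the remark that the result ``follows those of Theorems \ref{th:2}, \ref{th:4}''; the substance of that remark is precisely what you make explicit, namely that those proofs establish the \emph{matrix} equalities $E_1^\mathrm{T}\hat{W}_o^\gamma E_1=B^\mathrm{T}W_o^\gamma B$ and $E_1^\mathrm{T}\hat{W}_o^\gamma(T)E_1=B^\mathrm{T}W_o^\gamma(T)B$ (via $\Lambda^\mathrm{T}\tilde{W}_o^\gamma\Lambda=0$ and $E_1=\Lambda\bar{E}_1$), so that $\mathrm{logdet}$ can replace $\mathrm{tr}$. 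Your explicit flagging of why the trace-level statements of the theorems do not suffice on their own is exactly the right point of care.
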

\begin{proof}
The proof follows those of Theorems \ref{th:2}, \ref{th:4}. \frQED
\end{proof}
Unlike the $\mathcal{H}_2$ norm, volumetric measures are not additive functions.  Nevertheless, they still fall within a class of well-behaved set functions called \textit{submodular}, which can be approximately optimized by a greedy algorithm with a guaranteed optimality ratio of at least $1-e^{-1}$ \cite{summers2015submodularity, summers2014optimal}.  In view of Corollary \ref{cor:data_det}, it is straightforward to extend such a greedy algorithm to the data-driven case.

\section{Data-Driven Verification of Observability}\label{sec:obs}

A known issue with greedy selection is that it can lead to a resulting set of sensors that is not observable \cite{summers_structural}. In that regard, modifications can be incorporated into the greedy algorithm to enforce it to converge to a fully observable set, particularly by encouraging it to increase the rank of an observability test at each iteration \cite{summers2016convex}. However, performing such an observability test becomes unclear when the system matrices are unknown and when access to only partial state measurements from the system is available.

Building on the results of the previous sections, in what follows, we provide a data-driven condition for verifying whether a given set of sensors $\tilde{C}$ is observable.

\begin{theorem}\label{th:obs}
Let  $t_0,t_1,\ldots,t_k\in\mathbb{N}$, $k\ge Nm+n$, and consider the data matrices
$Z=[z(t_0) ~ z(t_1) ~ \ldots ~ z(t_k)],
\tilde{Z}=[\tilde{z}(t_0) ~ \tilde{z}(t_1) ~ \ldots ~ \tilde{z}(t_k)]$,
where $z(t)$ is defined in Lemma \ref{le:output} and $\tilde{z}(t)=[u(t-1:t-N)^\mathrm{T} ~ \tilde{y}(t-1:t-N)^\mathrm{T}]^\mathrm{T}$.
\begin{enumerate}
\item If $\mathrm{rank}(\tilde{Z})=Nm+n$ then $(A,\tilde{C})$ is observable.
\item Let $\mathrm{rank}(Z)=Nm+n$. Then, $(A,\tilde{C})$ is observable if and only if $\mathrm{rank}(\tilde{Z})=Nm+n$.
\end{enumerate}
\end{theorem}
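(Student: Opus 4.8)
The plan is to reduce both parts to a single factorization that separates the controllable directions of the collected data from the sensor-dependent observability structure. First I would introduce the lifted regressor $\bar z(t)=[u(t-1)^\mathrm{T}~u(t-2:t-N)^\mathrm{T}~x(t-N)^\mathrm{T}]^\mathrm{T}\in\mathbb{R}^{Nm+n}$ together with the matrix $\Lambda$ from \eqref{eq:zL}, so that $z(t)=\Lambda\bar z(t)$ exactly as established in the proof of Theorem \ref{th:2}. Applying the same telescopic expansion of \eqref{eq:sys2} to the tested output gives $\tilde y(t-1:t-N)=\tilde V_N x(t-N)+\tilde{\bar T}_N u(t-2:t-N)$, where $\tilde V_N$ and $\tilde{\bar T}_N$ are the observability and Toeplitz blocks built from $\tilde C$ in exact analogy with $V_N,\bar T_N$. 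This yields $\tilde z(t)=\tilde\Lambda\bar z(t)$, with $\tilde\Lambda$ obtained from $\Lambda$ by replacing $(\bar T_N,V_N)$ with $(\tilde{\bar T}_N,\tilde V_N)$. Collecting the samples into $\bar Z=[\bar z(t_0)~\ldots~\bar z(t_k)]$, I then have the two clean factorizations $Z=\Lambda\bar Z$ and $\tilde Z=\tilde\Lambda\bar Z$.

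The second step is a rank computation on $\tilde\Lambda$. Because of its block-triangular form, with the identity blocks $I_m$ and $I_{(N-1)m}$ occupying the first $Nm$ columns, any null vector of $\tilde\Lambda$ must vanish on those first $Nm$ entries and then satisfy $\tilde V_N c=0$ on the remaining $n$; hence $\ker\tilde\Lambda\cong\ker\tilde V_N$ and $\mathrm{rank}(\tilde\Lambda)=Nm+\mathrm{rank}(\tilde V_N)$. Since $\tilde V_N$ is a row permutation of the depth-$N$ observability matrix of $(A,\tilde C)$, it attains full column rank $n$ if and only if $(A,\tilde C)$ is observable. Consequently $\mathrm{rank}(\tilde\Lambda)=Nm+n$ iff $(A,\tilde C)$ is observable, and this single equivalence drives both parts.

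For part (1), I would chain the inequalities $\mathrm{rank}(\tilde Z)=\mathrm{rank}(\tilde\Lambda\bar Z)\le\mathrm{rank}(\tilde\Lambda)\le Nm+n$; if the hypothesis $\mathrm{rank}(\tilde Z)=Nm+n$ holds, then $\mathrm{rank}(\tilde\Lambda)=Nm+n$, forcing $\tilde V_N$ to have full column rank and hence $(A,\tilde C)$ observable. For part (2), the extra hypothesis $\mathrm{rank}(Z)=Nm+n$ is used to remove $\bar Z$ from the picture: since Assumption \ref{ass:obs} makes $\Lambda$ full column rank, $Z=\Lambda\bar Z$ gives $\mathrm{rank}(\bar Z)=\mathrm{rank}(Z)=Nm+n$, so the $(Nm+n)\times(k+1)$ matrix $\bar Z$ has full row rank (the standing bound $k\ge Nm+n$ guarantees enough columns). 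Right-multiplication by a full-row-rank matrix preserves rank, so $\mathrm{rank}(\tilde Z)=\mathrm{rank}(\tilde\Lambda\bar Z)=\mathrm{rank}(\tilde\Lambda)=Nm+\mathrm{rank}(\tilde V_N)$, and the equivalence of the previous step immediately yields $\mathrm{rank}(\tilde Z)=Nm+n\Leftrightarrow(A,\tilde C)$ observable.

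The main obstacle I anticipate is the observability-index bookkeeping: full column rank of $\tilde V_N$ is equivalent to observability of $(A,\tilde C)$ only when $N$ is at least the observability index of $(A,\tilde C)$, which may exceed that of $(A,\hat C)$ used to fix $N$. I would resolve this by invoking the uniform bound that every observability index is $\le n$ and taking $N\ge n$, which is harmless and consistent with the standing requirement $N\ge K$. The remainder is routine linear algebra; the only genuinely load-bearing facts are the kernel identity $\mathrm{rank}(\tilde\Lambda)=Nm+\mathrm{rank}(\tilde V_N)$ and the rank cancellation enabled by the full-row-rank of $\bar Z$.
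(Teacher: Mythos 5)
Your proposal is correct and follows essentially the same route as the paper: factor the data matrices as $Z=\Lambda\bar{Z}$ and $\tilde{Z}=\tilde{\Lambda}\bar{Z}$, establish $\mathrm{rank}(\tilde{\Lambda})=Nm+\mathrm{rank}(\tilde{V}_N)$ from the block-triangular structure, and then use rank inequalities (for part 1) and the full row rank of $\bar{Z}$ inherited from $\mathrm{rank}(Z)=Nm+n$ (for part 2). Your closing remark about the observability index of $(A,\tilde{C})$ possibly exceeding that of $(A,\hat{C})$ is a legitimate subtlety that the paper's sufficiency argument glosses over, and your fix of taking $N\ge n$ resolves it.
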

\vspace{-2mm}
\begin{proof}
To prove the first item, from \eqref{eq:sys}, the outputs from $t-1$ to $t-N$ can be written as $\hat{y}(t-1:t-N)=\tilde{V}_Nx(t-N)+\tilde{T}_Nu(t-1:t-N),$
where $\tilde{V}_N, ~\tilde{T}_N$ are as in Lemma \ref{le:output} but with $\hat{C}$ substituted by $\tilde{C}$. Then, defining
\begin{align}\nonumber
\bar{z}(t)=\begin{bmatrix}u(t-1:t-N) \\ x(t-N) \end{bmatrix},~
\tilde{\Lambda}= \begin{bmatrix}I_{Nm} & 0  \\  \tilde{T}_N & \tilde{V}_N \end{bmatrix},\hspace{-2mm}
\end{align}
and $\bar{Z}=[\bar{z}(t_0)~\ldots~\bar{z}(t_k)]$, it follows that $\tilde{Z}=\tilde{\Lambda}\bar{Z}$. Hence, since $\mathrm{rank}(\tilde{Z})=Nm+n$, we obtain 
\begin{equation}\label{eq:LZ1rank}
\mathrm{rank}(\tilde{\Lambda}\bar{Z})=Nm+n.
\end{equation}
At the same time, $\bar{Z}\in\mathbb{R}^{ (Nm+n)\times k}$, and thus $\textrm{rank}(\bar{Z})\le Nm+n$. Given this, \eqref{eq:LZ1rank} can hold only if 
\begin{equation}\label{eq:Lrank1}
\mathrm{rank}(\tilde{\Lambda})\ge Nm+n.
\end{equation}
 Next, note that the first $Nm$ columns of $\tilde{\Lambda}$ are linearly independent of the last $n$ columns. Therefore, 
\begin{equation}\label{eq:Lrank2}
Nm+\textrm{rank}(\tilde{V}_N)\ge \mathrm{rank}(\tilde{\Lambda}).
\end{equation}
Combining \eqref{eq:Lrank1}-\eqref{eq:Lrank2} we obtain $\textrm{rank}(\tilde{V}_N)\ge n$, and hence $\textrm{rank}(\tilde{V}_N)=n$, meaning $(A,~\tilde{C})$ is observable. 

To prove the second item, recall from the proof of Theorem \ref{th:2} that $z(t)=\Lambda \bar{z}(t)$.
Hence, we have $Z=\Lambda \bar{Z}$, and since $\textrm{rank}(Z)=Nm+n$, it follows that $\textrm{rank}(\bar{Z})=Nm+n$. We now show necessity and sufficiency.
\begin{itemize}
\item Suppose that $(A, \tilde{C})$ is observable. Then, $\textrm{rank}(\tilde{\Lambda})=Nm+n$, and thus $\textrm{rank}(\tilde{Z})=Nm+n$ since $\tilde{Z}=\tilde{\Lambda}\bar{Z}$ and $\tilde{\Lambda}, \bar{Z}$ are full rank matrices with rank $Nm+n$.
\item  Suppose that $\textrm{rank}(\tilde{Z})=Nm+n$. Then, from the first item of the theorem, $(A, \tilde{C})$ is observable. \frQED
\end{itemize}

\end{proof}

\begin{remark}
In the presence of noise, it should be checked if the minimum singular value of $\tilde{Z}$ is above an empirical threshold, instead of checking its rank.\frqed
\end{remark}

Using Theorem \ref{th:obs}, there are several ways in which observability of the selected set of sensors could be enforced in the previous section. One approach is to add heuristics to the greedy algorithm that promote full-rank observability, as in \cite{summers2016convex}. Another is to relax the cardinality constraint in the greedy selection algorithm, selecting additional sensors until observability of the resulting set is verified using Theorem \ref{th:obs}.  Finally, initializing the algorithm with an observable set (e.g.,  $\hat{C}$) ensures that the final selection remains observable \cite{summers_structural}.

\section{Simulations}\label{sec:sim}

\subsection{Power System}

To showcase how the proposed approach can learn and optimize infinite-horizon observability metrics, we carry out simulations on the 10-machine New England power system. Following \cite{vafaee2024learning}, the dynamics of this system are given by $m_i\ddot{\theta}_i+d_i\dot{\theta}_i=-\sum_{j\sim i}k_{ij}(\theta_i-\theta_j)+u_i$, $i=1,\ldots,10$, where $\theta_i~[\textrm{rad}]$ represents the rotor angle of the generator at bus $i$, $\dot{\theta}_i$ its corresponding frequency, $m_i~[\textrm{pu-sec2/rad}]$ its inertia, $d_i~[\textrm{pu-sec/rad}]$ its damping coefficient, $u_i$ its mechanical power input,  and $k_{ij}$ the line suspectance for connected generators indicated by $i\sim j$. All (unknown) parameter values are taken from \cite{atawi2013advance}. This system can be written in the compact form:
\begin{align*}
\frac{\textrm{d}}{\textrm{d}t}\begin{bmatrix}\theta \\ \dot{\theta} \end{bmatrix}&=\begin{bmatrix}0 & I_{10} \\ -M^{-1}L & -M^{-1}D \end{bmatrix}\begin{bmatrix}\theta \\ \dot{\theta} \end{bmatrix}+\begin{bmatrix}0\\ M^{-1} \end{bmatrix}u(t),\\y(t)&=C_c\begin{bmatrix}\theta^\mathrm{T} & \dot{\theta}^\mathrm{T} \end{bmatrix}^\mathrm{T},
\end{align*}
where $\theta=[\theta_1~\ldots~\theta_{10}]^\textrm{T}$, $\dot{\theta}=[\dot{\theta}_1~\ldots~\dot{\theta}_{10}]^\textrm{T}$, $M=\textrm{diag}(m_1,\ldots,m_{10})$, $D=\textrm{diag}(d_1,\ldots,d_{10})$, $u=[u_1~\ldots\allowbreak~u_{10}]^\textrm{T}$, and $L$ is the Laplacian matrix. In addition, $C_c$ represents the set of available sensors, each of which can measure a distinct state of the system. The system is subsequently discretized with a timestep of $0.2~\textrm{sec}$, to obtain the discrete-time system matrices $A,B,C$ \cite{fazelnia2016convex}.

We assume the initial sensor set is $\gamma_0=\hat{\gamma}=\{1,3,4,6,8\}$,  hence the matrix $\hat{C}$ of Assumption \ref{ass:obs} consists of rows $1,3,4,6,8$ of the matrix $C$. To increase the observability of the system, we aim to add $7$ more sensors to $\hat{C}$, out of the remaining set of sensors $\{2,5,7,9,10,11,12,13,14,15,16,17,18,19,20\}$. To this end, we employ a data-driven greedy algorithm, gathering input-output data from the system under the control input proposed in \cite{lewis2010reinforcement} augmented with exploration noise, and aiming to learn the sensors that optimize the infinite-horizon volumetric measure in \eqref{eq:detinf} without knowledge of the system. We choose the discount factor as $a=0.99$, and $N=4$. Overall, $2000$ data samples were collected, and the greedy algorithm required $2.53$ seconds to run, out of which $1.39$ seconds were spent to calculate the pseudoinverse of $\Psi$. By contrast, a data-driven brute-force search required $62.13$ seconds.

\begin{figure}[!t] 
		\centering
                \includegraphics[width=6.8cm,height=5cm]{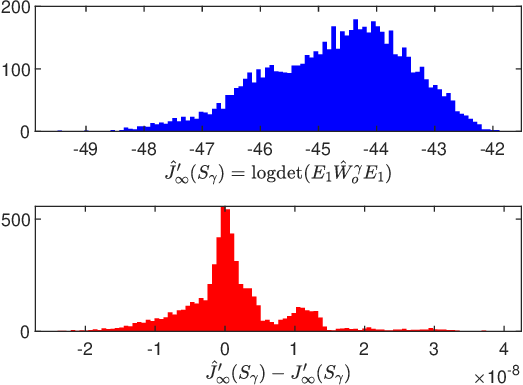}
                \caption{ Histogram of a)  the estimation $J'_{\infty}(\mathbb{S}_{\gamma})$ of $\hat{J}'_{\infty}(\mathbb{S}_{\gamma})$ for all $\gamma$ with cardinality $7$, using the data-driven formula of Corollary \ref{cor:data_det}; and c) the corresponding estimation errors.}
                \label{fig:hist}
\end{figure}

 \begin{center}
\begin{table}[!t]
  \begin{center}
  \setlength{\tabcolsep}{4pt} 
\renewcommand{\arraystretch}{1.0} 
\small
  \caption{Performance of chosen sensor set across different metrics.\vspace{3mm} \label{tab:perform}}
  \resizebox{\columnwidth}{!}{%
   \begin{tabular}{c||c|c|c|c|c}
      \toprule 
      \textbf{ } & $\mathrm{logdet}(B^\mathrm{T}W_o^\gamma B)$ & $\mathrm{tr}(B^\mathrm{T}W_o^\gamma B)$ & $\mathrm{tr}(W_o^\gamma)$ & $\mathrm{tr}({W_o^\gamma}^{-1})$ & $\lambda_{\textrm{min}}(W_o^\gamma)$\\
      \midrule 
      Initial  & $-57.4$ & $8.9\cdot10^{-2}$ & $88.4$ & $2.8\cdot10^{4}$ & $4.4\cdot10^{-5}$\\
      Final & $-41.9$ & $0.2$ & $198.3$ & $25.9$ & $0.1$ \\
      \bottomrule 
    \end{tabular}}
  \end{center}
\end{table}
\end{center}

Figure \ref{fig:hist} shows the histogram of the estimated cost $J'_{\infty}(\mathbb{S}_{\gamma})=\mathrm{logdet}(E_1^\mathrm{T}\hat{W}_o^\gamma E_1 )$ provided by the proposed data-driven procedure according to Corollary \ref{cor:data_det}, as well as the histogram of the associated estimation errors. We verify that the data-driven estimation of the cost is indeed accurate, with the estimation errors being in the range of $10^{-8}$. In addition, the estimated approximately optimal set of sensors provided by the greedy algorithm  is $\gamma^\star=\{1,3,4,6,8\}\cup\{2,5,7,9,10,13,18\}$, which is exactly equal to the optimal set of sensors, with a cost function equal to $-41.89$. Table \ref{tab:perform}  compares the performance of the initial and final sensor sets across five standard metrics. We observe a consistent improvement after optimization, confirming that the chosen metric aligns well with the system's observability structure.

\begin{figure}[t]
    \centering
    \subfloat[Error in estimating $J_T(S_j)$\\ for all $j$ and $T$.\label{fig:JT}]{
        \includegraphics[width=0.53\columnwidth]{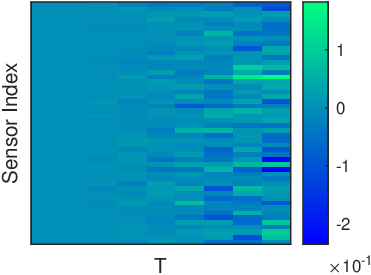}
    }
    \hfill
    \subfloat[Optimal sensor selection for all $T$.\label{fig:schedule}]{
        \includegraphics[width=0.42\columnwidth]{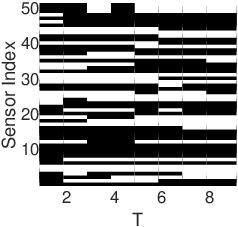}
    }
    \caption{Comparison between estimation error and optimal sensor selection over horizon lengths $T=1,\dots,8$.}
    \label{fig:sidebyside_23}
\end{figure}

\begin{figure}[!t] 
		\centering
                \includegraphics[width=6.2cm,height=4.5cm]{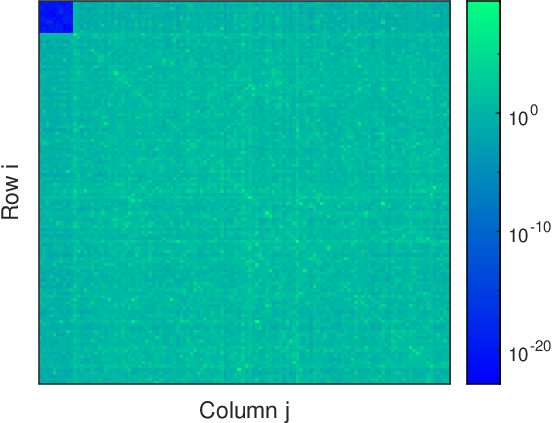}
                \caption{The mean squared value of the relative error $[\hat{W}_o^\gamma(T)-\bar{W}_o^{\gamma}(T)]_{ij}/[\bar{W}_o^{\gamma}(T)]_{ij}$ for each entry $i,j$ and for $T=8$, over all sensor indices.}
                \label{fig:Wtilde}
\end{figure}

\subsection{Large Scale System}

To further showcase how the proposed approach can learn and optimize \textit{finite-horizon} observability metrics, we perform simulations on a random system with $n=50$ states, $m=10$ actuators, and $p=50$ available sensors. The initial set of sensors consists of $r=10$ units, and the objective is to substitute them with $p'=20$ units that maximize the finite-horizon observability metric \eqref{eq:trfin} with $T=8$, without knowledge of the system's dynamics. To that end, we employ Algorithm \ref{al:placement}, gathering input-output data from the system, with $N=6$, and using them to approximate the optimal sensor set.  Overall, $6700$ data samples were collected, and the greedy algorithm required $78.29$ seconds to run, out of which $68.59$ seconds were spent to calculate the pseudoinverse of $\Psi_1$.

According to Lemma \ref{le:selection}, we have $J_T(\mathbb{S}_{\gamma})=\sum_{i=1}^{20'}J_T(\mathbb{S}_{\gamma_i})$ for any $\gamma$ with a cardinality of $20$, so the problem boils down to accurately estimating the costs $J_T(\mathbb{S}_{j})$ for all $j=1,\ldots,50$ using the data-driven formula \eqref{eq:costWhat_t}. Figure \ref{fig:JT} shows the resulting errors in estimating these costs, over all values of $T=1,\ldots,8$. From this figure, it is clear that the errors are practically negligible, and in the order of $10^{-12}$. Using Algorithm \ref{al:placement}, Figure \ref{fig:schedule} also shows the optimal choice of sensors for all values of $T=1,\ldots,8$. Finally, Figure \ref{fig:Wtilde} validates Theorem \ref{th:4}, in which we proved that the upper-left part of the matrix $\bar{W}_o^\gamma(T)$ can be uniquely identified from input-output data. Note that it is this part that is needed to calculate $J_T$.

\section{Conclusion}\label{sec:conc}

We studied input-output data-driven sensor selection for CPS with unknown system models. We developed a model-free algorithm to estimate metrics of observability and select sensors using input-output data. 

Future work includes extending our framework to online settings, following \cite{fotiadis2021learning}. Another open problem is that of polynomial-time data-driven sensor selection with strict observability guarantees. While details on enforcing observability were provided in Section \ref{sec:obs}, they relied on relaxing the scope of the problem.


\newcommand{\BIBdecl}{\setlength{\itemsep}{-0.15 em}}
\bibliographystyle{IEEEtran}       
\bibliography{Placement_bib}

\end{document}